\newcommand*{\myalign}[2]{\multicolumn{1}{#1}{#2}}
\newtheorem{problem}{Problem}
\newtheorem{algorithm}{Algorithm}
\newtheorem{remark}{Remark}
\definecolor{rangyek}{RGB}{0, 75, 255}
\definecolor{rangdo}{RGB}{239, 62, 91}
\DeclareRobustCommand\sampleline[1]{%
	\tikz\draw[#1] (0,0) (0,\the\dimexpr\fontdimen22\textfont2\relax)
	-- (2em,\the\dimexpr\fontdimen22\textfont2\relax);%
}
\definecolor{START}{rgb}{0.4660 0.6740 0.1880}
\definecolor{TARGET}{rgb}{0 0.4470 0.7410}
\definecolor{OBSTACLES}{rgb}{0.93,0.69,0.13}
\definecolor{START1}{rgb}{0, 1, 0}
\definecolor{TARGET1}{rgb}{0.3010 0.7450 0.9330}
\definecolor{OBSTACLES1}{rgb}{1, 0, 0}
\newsavebox\myboxA
\newsavebox\myboxB
\newlength\mylenA
\newcommand*\xoverline[2][0.75]{%
	\sbox{\myboxA}{$\m@th#2$}%
	\setbox\myboxB\null
	\ht\myboxB=\ht\myboxA%
	\dp\myboxB=\dp\myboxA%
	\wd\myboxB=#1\wd\myboxA
	\sbox\myboxB{$\m@th\overline{\copy\myboxB}$}
	\setlength\mylenA{\the\wd\myboxA}
	\addtolength\mylenA{-\the\wd\myboxB}%
	\ifdim\wd\myboxB<\wd\myboxA%
	\rlap{\hskip 0.5\mylenA\usebox\myboxB}{\usebox\myboxA}%
	\else
	\hskip -0.5\mylenA\rlap{\usebox\myboxA}{\hskip 0.5\mylenA\usebox\myboxB}%
	\fi}
\newtcolorbox{resp}[1][]{%
	enhanced jigsaw,%
	colback=gray!5!white,%
	colframe=gray!80!black,%
	size=small,%
	boxrule=1pt,%
	halign title=flush center,%
	coltitle=black,%
	breakable,%
	drop shadow=black!50!white,%
	attach boxed title to top left={xshift=1cm,yshift=-\tcboxedtitleheight/2,yshifttext=-\tcboxedtitleheight/2},%
	minipage boxed title=3cm,%
	boxed title style={%
		colback=white,%
		size=fbox,%
		boxrule=1pt,%
		boxsep=2pt,%
		underlay={%
			\coordinate (dotA) at ($(interior.west) + (-0.5pt,0)$);
			\coordinate (dotB) at ($(interior.east) + (0.5pt,0)$);
			\begin{scope}[gray!80!black]
				\fill (dotA) circle (2pt);
				\fill (dotB) circle (2pt);
			\end{scope}
		}%
	},%
	#1%
}
\newtcolorbox{mybox}[2][]{enhanced,
	attach boxed title to top left={xshift=1cm,yshift=-2mm},
	fonttitle=\bfseries,varwidth boxed title=0.7\linewidth,
	colbacktitle=gray!45!white,coltitle=gray!10!black,colframe=gray!50!black,
	interior style={top color=gray!5!white,bottom color=gray!0!white},
	boxed title style={boxrule=0.75mm,colframe=white,
		borderline={0.1mm}{0mm}{gray!50!black},
		borderline={0.1mm}{0.75mm}{gray!50!black},
		interior style={top color=gray!30!white,bottom color=gray!5!white,
			middle color=gray!50!white},
		drop fuzzy shadow},
	title={#2},#1}
\newcommand{\R}{{\mathbb{R}}}
\newcommand{\ie}{{\it i.e.}}
\newcommand{\Let}{:=}
\newcommand{\MATLAB}{\textsc{Matlab}\xspace}
\definecolor{lightblue}{rgb}{0.30, 0.75, 0.93}
\definecolor{lightblue}{rgb}{0.30, 0.75, 0.93}
\definecolor{mycolor}{rgb}{0, 0.45, 0.75}
\definecolor{mycolor1}{rgb}{0.39, 0.83, 0.07}
\definecolor{fluorescentpink}{rgb}{1.0, 0.08, 0.58}
\definecolor{royalblue(web)}{rgb}{0.25, 0.41, 0.88}
\definecolor{vividcerise}{rgb}{0.85, 0.11, 0.51}
\definecolor{tangelo}{rgb}{0.98, 0.3, 0.0}
\definecolor{persiangreen}{rgb}{0.0, 0.65, 0.58}
\definecolor{lemon}{rgb}{1.0, 0.97, 0.0}
\definecolor{gdash}{rgb}{0.10,0.82,0.10}
\definecolor{fluorescentpink}{rgb}{1.0, 0.08, 0.58}
\definecolor{D5Mr}{rgb}{0.62, 0.0, 1.0}
\definecolor{cUdg}{rgb}{0.3, 0.73, 0.09}
\definecolor{QKVd}{rgb}{0.03, 0.57, 0.82}
\definecolor{RE12}{rgb}{0.48, 0.25, 0.0}
\begin{document}
	
	\title[From Data to Global Asymptotic Stability of Unknown Large-Scale Networks]{From Data to Global Asymptotic Stability of Unknown Large-Scale Networks with Provable Guarantees}
	
	\author{Mahdieh Zaker}
	\affiliation{%
			\institution{School of Computing}
		   \institution{Newcastle University}
		\country{}}
	\email{mahdieh.zaker@newcastle.ac.uk}
	
    \author{Amy Nejati}
	\affiliation{%
			\institution{School of Computing}
		   \institution{Newcastle University}
		\country{}}
	\email{amy.nejati@newcastle.ac.uk}
    
	\author{Abolfazl Lavaei}
	\affiliation{%
			\institution{School of Computing}
		   \institution{Newcastle University}
		\country{}}
	\email{abolfazl.lavaei@newcastle.ac.uk}
	
	\begin{abstract}
		
		We offer a \emph{compositional data-driven} scheme for synthesizing controllers that ensure global asymptotic stability (GAS) across large-scale interconnected networks, characterized by \emph{unknown} mathematical models. In light of each network's configuration composed of numerous subsystems with smaller dimensions, our proposed framework gathers data from each subsystem's trajectory, enabling the design of \emph{local controllers} that ensure \emph{input-to-state stability (ISS)} properties over subsystems, signified by ISS Lyapunov functions. To accomplish this, we require only a \emph{single input-state} trajectory from each unknown subsystem up to a specified time horizon, fulfilling certain rank conditions. Subsequently, under \emph{small-gain compositional} reasoning, we leverage ISS Lyapunov functions derived from data to offer a \emph{control Lyapunov function (CLF)} for the interconnected network, ensuring GAS certificate over the network. We demonstrate that while the computational complexity for designing a CLF increases \emph{polynomially} with the network dimension using sum-of-squares (SOS) optimization, our compositional data-driven approach significantly mitigates it to \emph{linear} with respect to the number of subsystems. We showcase the efficacy of our data-driven approach over a set of benchmarks, involving physical networks with diverse interconnection topologies.
		
	\end{abstract}
	
	\keywords{Data-driven control, global asymptotic stability, large-scale network, provable guarantees}
	
	\maketitle
	
	\section{Introduction}\label{sec:intro}
	In recent years, significant focus has been directed towards large-scale interconnected networks, driven by their capacity to model a wide array of real-world applications, including, but not limited to, biological networks, transportation systems, and interconnected power grids. The efficiency of an interconnected network relies on how its subsystems interact and contribute to the overall behavior of the network.  When an individual subsystem experiences failure or disturbance, it can propagate disruptions across other subsystems, causing significant upheavals in the network's operation.
	
	Input-to-state stability has been essentially employed to investigate robust behaviors of interconnected networks, influenced by disturbances acting as adversarial inputs~\citep{nesic2002integral,agrachev2008input,isidori2013nonlinear}. In particular, the influence of any subsystem on others within an interconnected network can be effectively captured by ISS Lyapunov functions, while constructing interaction gain functions. These functions are instrumental in delineating the interconnection structure of the network and capturing the effect of neighboring subsystems. According to the \emph{small-gain reasoning}, if the constructed gains satisfy certain compositional conditions, the overall network would exhibit stability~\citep{jiang1994small, jiang1996lyapunov, dashkovskiy2007iss, dashkovskiy2010small}.
	
	\noindent\textbf{Primary Challenge.} To conduct stability analysis over interconnected networks using ISS properties, \emph{precise mathematical models} of subsystems are essential. However, this situation is not typical in numerous real-world applications, where system's dynamics are often unknown. To address this challenge, data-driven approaches have emerged in two distinct categories: \emph{indirect and direct}~\citep{dorfler2022bridging}. Indirect data-driven approaches aim to approximate unknown dynamics through identification techniques; nonetheless, obtaining an accurate mathematical model can be computationally demanding, particularly for complex nonlinear systems~\citep{hou2013model}. Even if a model can be identified through system identification approaches, designing a controller that guarantees the input-to-state stability of the identified model remains challenging. Hence, the complexity arises from two layers: (i) model identification, and (ii) controller synthesis using conventional model-based approaches. In contrast, \emph{direct} data-driven techniques aim to circumvent system identification and directly offer formal analysis on the stability of the system solely based on observed data~\citep{dorfler2022bridging}.
	
	\noindent{\bf Background Research.} There have been a few studies based on data-driven frameworks to provide stability certificates for unknown dynamical systems. In this context, \citep{depersis2020tac} and \citep{guo2021data} propose a data-driven scheme for the stability analysis of persistently excited linear time-invariant and nonlinear polynomial systems, respectively. In \citep{berberich2020data}, a robust data-driven model-predictive control is introduced for linear time-invariant systems. The study in \citep{taylor2021towards} develops a data-driven approach for robust control synthesis of nonlinear systems considering model uncertainty. Additionally, \citep{harrison2018control} introduces a data-driven approach for learning controllers capable of rapid adaptation to unknown dynamics. In \citep{vzikelic2024compositional}, a methodology is introduced for learning a composition of neural network policies in stochastic environments, providing a formal probabilistic certification ensuring the fulfillment of a specification over the policy's behavior. Data-driven stability analysis for unknown systems is presented for switched systems in \citep{kenanian2019data}, continuous-time systems in \citep{boffi2021learning}, and discrete-time systems in \cite{lavaei2022data}. The work~\cite{zhou2022neural} presents a framework for stabilizing unknown nonlinear systems by jointly learning a neural Lyapunov function and a nonlinear controller, ensuring theoretical stability guarantees through the use of Satisfiability Modulo Theories (SMT) solvers. By applying overapproximation techniques to characterize the set of polynomial dynamics that align with the noisy measured data, \cite{chen2024data} constructs an ISS Lyapunov function and a corresponding ISS control law for unknown nonlinear input-affine systems with polynomial dynamics. A data-driven method for synthesizing the \emph{incremental} ISS controllers and certifying the incremental ISS property for unknown nonlinear polynomial systems has recently been explored in~\cite{zaker2024certified}.
	
	While studies \citep{depersis2020tac,guo2021data,berberich2020data,taylor2021towards,harrison2018control,vzikelic2024compositional,kenanian2019data,boffi2021learning,lavaei2022data,zhou2022neural,chen2024data,zaker2024certified} focus on stability analysis and controller synthesis using data, their methods are specifically tailored to \emph{monolithic systems} with low dimensions and cannot be directly applied to large-scale networks due to the \emph{sample complexity} problem. Some limited recent efforts aim to address stability analysis for higher-dimensional systems. In particular, the work \cite{sperl2023approximation} explores how deep neural networks can approximate control Lyapunov functions for high-dimensional systems, aiming to circumvent the curse of dimensionality. Utilizing the compositional structure inherent in interconnected nonlinear systems, \cite{liu2024compositionally} proposes a compositional method to train and validate neural Lyapunov functions for stability analysis. In \cite{zhang2023compositional}, a compositional approach is employed to provide neural certificates based on ISS Lyapunov functions and controllers for networked dynamical systems. The work \cite{lavaei2023ISS} provides a data-driven stability certificate for interconnected networks.
	
	While \cite{sperl2023approximation,liu2024compositionally,zhang2023compositional} provide compositional techniques to ensure stability in large-scale systems, they each assume \emph{known system dynamics}, in contrast to our work which focuses on unknown dynamics—a more challenging scenario that is often the case in reality. Additionally, the ISS Lyapunov functions constructed by neural networks in \cite{zhang2023compositional} are merely candidate functions without a correctness guarantee, necessitating an \emph{additional verification step} using SMT solvers such as Z3~\cite{de2008z3}, dReal~\cite{gao_-complete_2012} or MathSat~\cite{cimatti_mathsat5_2013}. However, SMT solvers may fail to terminate \cite{wongpiromsarn2015automata} or encounter scalability issues depending on the neural network size or system complexity (see~\cite[Section 5.2]{abate2022neural}). In contrast, our method provides ISS Lyapunov functions with a correctness guarantee in a single step. Furthermore, the compositional condition in \cite{zhang2023compositional} relies on a stringent symmetry assumption regarding the network structure—a restrictive requirement that our approach does not impose. 
	
   In contrast to \citep{lavaei2023ISS}, their approach employs a \emph{scenario-based method }that requires data to be \emph{independent and identically distributed (i.i.d.)}. This limitation allows only one input-output data pair per trajectory~\citep{calafiore2006scenario}, necessitating the collection of \emph{multiple independent trajectories} (up to millions in practical scenarios) to achieve the desired confidence level, based on a known closed-form relationship among them. However, our approach requires only a \emph{single input-state trajectory} from each subsystem to derive our results. Additionally,  \citep{lavaei2023ISS} restricts the system class to be \emph{homogeneous}, which does not apply to many practical scenarios. Our results, however, apply to nonlinear systems with polynomial dynamics. Furthermore, although \citep{lavaei2023ISS} solely provides \emph{verification analysis} over an interconnected network, our work offers controller design—a more challenging and useful task—to enforce the global asymptotic stability property over the network.

   Additionally, while we provide the GAS certificate for interconnected networks of nonlinear subsystems by incorporating the influence of subsystems on each other within the network topology via adversarial inputs, \citep{depersis2020tac} considers only linear time-invariant systems using a monolithic approach, which cannot enforce the GAS property for high-dimensional systems (exceeding five states). In contrast, our results apply to systems with up to $2000$ states (cf. Table~\ref{tab:benchmark}). Notably, \citep{depersis2020tac} does not also investigate the ISS property. Furthermore, although it examines robustness to measurement noise, the analysis assumes bounded noise, whereas our results ensure robustness against \emph{unbounded} adversarial inputs.
	
	\noindent\textbf{Original Contributions.} Our work introduces an innovative framework in a \emph{compositional data-driven} manner for synthesizing controllers over an \emph{unknown} interconnected network, while ensuring its global asymptotic stability. 
	Our methodology collects a single trajectory from each unknown subsystem, facilitating the design of a local controller alongside an ISS Lyapunov function for each subsystem based on data. By leveraging the \emph{small-gain compositional} reasoning, we then compose \emph{data-driven} ISS Lyapunov functions and construct a \emph{control Lyapunov function } for the interconnected network, thereby ensuring its GAS certificate.  
	While the computational complexity of CLF design, using SOS optimization, escalates \emph{polynomially} with the network dimension, our compositional data-driven approach offers a \emph{linear complexity} concerning the number of subsystems.  We employ several benchmarks to demonstrate the applicability of our data-driven approach over some physical networks with different interconnection topologies.
	
	\section{Problem description}\label{Problem_Description}
	
	\subsection{Notation}
	We employ the following notation throughout the paper. Sets of real, positive and non-negative real numbers are denoted by $\R,\R^+$, and $\R^+_0$, respectively. We denote sets of non-negative and positive integers by $\mathbb{N} := \{0,1,2,\ldots\}$ and $\mathbb{N}^+:=\{1,2,\ldots\}$, respectively. Given $N$ vectors $x_i \in \R^{n_i}$, $x=[x_1;\ldots;x_N]$ denotes the corresponding column vector of dimension $\sum_i n_i$. Moreover, we use $[x_1\;\; \ldots\;\; x_N]$ and $[A_1\;\;\ldots\;\;A_N]$ to represent horizontal concatenation of vectors $x_i\in\R^n$ and matrices $A_i\in\R^{n\times m}$ to form $n\times N$ and $n\times mN$ matrices, respectively. We use $\Vert\cdot\Vert$ to signify the Euclidean norm of a vector $x\in\R^{n}$ and the induced $2$-norm of a matrix $A\in\R^{n\times m}$.  For any square matrix $P$, $\lambda_{\min}(P)$ and $\lambda_{\max}(P)$ denote minimum and maximum eigenvalues of $P$, respectively.
	We use $P\succ 0$ to represent that a \emph{symmetric} matrix $P\in\R^{n\times n}$ is positive definite, \emph{i.e.,} all its eigenvalues are positive. A (block) diagonal matrix in $\R^{N\times{N}}$ with  diagonal matrix entries $(A_1,\ldots,A_N)$ and scalar entries $(a_1,\ldots,a_N)$ is denoted by $\mathsf{blkdiag}(A_1,\ldots,A_N)$ and $\mathsf{diag}(a_1,\ldots,a_N)$, respectively. The construction of a matrix with elements $a_{ij}$ in the $i$-th row and $j$-th column is denoted by $\{a_{ij}\}$. A column vector in $\R^{n}$ with all elements equal to one is represented by $\mathds{1}_n$. The transpose of a matrix $P$ is denoted by $P^\top$, while its \emph{left (right)} pseudoinverse is represented as $P^\dagger$ ($P^\ddagger$). We denote an $n\times n$ identity matrix by $\mathds I_n$.
	A function $\beta: \mathbb{R}_{0}^+ \rightarrow \mathbb{R}_{0}^+$ is classified as a $\mathcal{K}$ function if it is continuous, strictly increasing, and satisfies $\beta(0)=0$, while it belongs to the class $\mathcal{K}_\infty$ if it approaches infinity as its argument tends to infinity. A function $\beta: \mathbb{R}_{0}^+ \times \mathbb{R}_{0}^+ \rightarrow \mathbb{R}_{0}^+$ is considered to be in the class $\mathcal{KL}$ if, for each fixed $s$, $\beta(r,s)$ is a $\mathcal{K}$ function with respect to $r$, and for each fixed $r > 0$,  $\beta(r,s)$ is decreasing with respect to $s$, and it approaches 0 as $s$ tends to infinity.
	
	\subsection{Individual Subsystems}\label{systems1}
	In this work, we characterize each subsystem as a continuous-time nonlinear polynomial system, as elucidated in the subsequent definition.
	\begin{definition}
		A continuous-time nonlinear polynomial system (ct-NPS) is described by  
		\begin{align}\label{eq:subsystem}
			\Sigma_i\!: \dot x_i=A_i\mathcal F_i(x_i) + B_iu_i + D_iw_i,
		\end{align}
		where $\mathcal F_i(x_i) \in \R^{N_i}$ is a vector of monomials in states $x_i\in \R^{n_i}$, with $\mathcal F_i(0)=0$, $A_i \in \R^{n_i\times N_i}$ is the system matrix, $B_i \in \R^{n_i\times m_i}$ is the control matrix, and $D_i \in \R ^{n_i \times \sigma_i}$ is the adversarial matrix with $\sigma_i = \sum_{j = 1,j\neq i}^{M} n_j$, where $M$ is the number of subsystems. In addition, $u_i\in \R^{m_i}$ denotes the \emph{control} input, and $w_i \in \R^{\sigma_i}$ represents the \emph{adversarial} input of ct-NPS. We employ the tuple $\Sigma_i\!=\!(A_i,B_i, D_i, \R^{n_i},\R^{N_i},\R^{m_i}, \R^{\sigma_i})$ to denote the subsystem in~\eqref{eq:subsystem}.
	\end{definition}
	
	In the context of our data-driven work, both matrices $A_i$ and $B_i$ are \emph{unknown}, while matrix $D_i$ is considered known as it captures the weights of interconnections among subsystems, which are often a-priori known in interconnected networks. Moreover, we let $\mathcal F_i(x_i)$ include all monomials up to a maximum degree, which can be inferred in ascending order based on physical insights into the unknown system. This approach ensures that $\mathcal F_i(x_i)$ potentially includes all monomial terms present in the actual unknown subsystem.
	
	Given that the primary focus of this work is to conduct stability analysis over interconnected networks consisting of individual subsystems in the form of~\eqref{eq:subsystem}, in the following subsection, we present the network structure and elucidate how subsystems can be interconnected.
	
	\subsection{Interconnected Network}\label{In-Net}
	Here, we provide a formal definition of interconnection between subsystems $\Sigma_i$. Let us consider an interconnected network $\Sigma$, comprising $M \in \mathbb{N}^+$ subsystems $\Sigma_i$, with adversarial inputs and their corresponding matrices partitioned as
		\begin{align}\label{eq:partitioned w}
			&w_i  =[w_{i 1} ; \ldots ; w_{i(i-1)} ; w_{i(i+1)} ; \ldots ; w_{i M}],\\
			&D_i = [D_{i 1} \;\; \ldots \;\; D_{i(i-1)} \;\; D_{i(i+1)} \;\; \ldots \;\; D_{i M}],
		\end{align}
		where $D_{ij}\in\R^{n_i\times n_j}$. We assume the dimension of $w_{ij}$ matches that of $x_j$, a well-defined assumption in the context of small-gain reasoning. If there is no connection from subsystem $\Sigma_j$ to $\Sigma_i$, it implies that the corresponding adversarial input is identically zero, \ie, $w_{ij}\equiv 0$, otherwise, $w_{ij} = x_j$.
	
	In the interconnected network, \emph{adversarial} inputs and states of subsystems interact mutually, where the states of one subsystem influences an adversarial input of another (cf. interconnection constraint~\eqref{Internal}). On the contrary,  \emph{control} inputs are those not involved in constructing the interconnection: the GAS property is defined over states, while the primary aim is to design control inputs that ensure such desired property. Now, we proceed with defining an interconnected network.
	
	\begin{definition}\label{def:interconnection}
		Consider $M \in \mathbb{N}^+$ subsystems $\Sigma_i\!=\!(A_i,B_i, D_i, \R^{n_i},$ $\R^{N_i},\R^{m_i}, \R^{\sigma_i}), i \in\{1, \ldots, M\}$, with the adversarial inputs partitioned as in~\eqref{eq:partitioned w}. Let us interconnect subsystems $\Sigma_i, i \in\{1, \ldots, M\}$, with the following interconnection constraint:
		\begin{align}\label{Internal}
			w_{i j}=x_j,\quad \forall i, j \in\{1, \ldots, M\},\; i \neq j.
		\end{align} 
		Then $\Sigma=\mathcal{I}(\Sigma_1, \ldots, \Sigma_M)$ is an interconnected network of $\Sigma_i$, described by
		\begin{align}\label{eq:sys2-network}
				\Sigma\!:
				\dot x=A\mathcal F(x) + Bu,
		\end{align}
		where $A \in \mathbb{R}^{n \times N}$, with $n:=\sum_{i=1}^M n_i, N:=\sum_{i=1}^M N_i$, is an $n \times N$ block matrix with diagonal elements $(A_1,\ldots,A_M) $ and off-diagonal elements $\tilde D_i = [\tilde D_{i 1} \;\; \ldots \;\; \tilde D_{i(i-1)} \;\; \tilde D_{i(i+1)} \;\; \ldots \;\; \tilde D_{i M}]$, having values depending on the interconnection topology, where $\tilde D_{ij}  = [D_{ij} \;\; \mathbf{0}_{n_i\times(N_j - n_j)}]$, with $\mathbf{0}_{n_i\times(N_j - n_j)}$ being a zero matrix. In addition, $B = \mathsf{blkdiag}(B_1,\ldots,B_M) \in \R^{n\times m},u\in \R^{m}$ with $m:=\sum_{i=1}^M m_i$, and $\mathcal F(x) = [\mathcal F_1(x_1);\dots; \mathcal F_M(x_M)]\in \R^{N}$.  We utilize $\Sigma\!=\!(A, B, \R^{n},\R^{N},\R^{m})$ to denote the interconnected network in~\eqref{eq:sys2-network}.
	\end{definition}
	
	\subsection{GAS Property of Interconnected Network}
	
	Here, we first define the global asymptotic stability of an interconnected network~\citep{angeli2000characterization}.
	
	\begin{definition}\label{GAS}
		An interconnected network $\Sigma=\mathcal{I}(\Sigma_1, \ldots, \Sigma_M)$ achieves global asymptotic stability (GAS) if for any $x(0) \in \mathbb{R}^n$ and some function $\beta$ belonging to the class $\mathcal{KL}$, the following condition holds:
		\begin{align*}
			\Vert x(t) \Vert \leq \beta(\Vert x(0)\Vert ,t).
		\end{align*}
		This indicates that all solutions of $\Sigma$ converge to the origin, as its equilibrium point, when time goes to infinity.
	\end{definition}
	
	We now present the following theorem to demonstrate the required conditions under which an interconnected network achieves GAS according to Definition~\ref{GAS}.
	
	\begin{theorem}[\citep{sontag1996new}]
		Consider an interconnected network
		$\Sigma=\mathcal{I}(\Sigma_1, \ldots,$ $ \Sigma_M)$, composed of $M$ subsystems $\Sigma_i$. Let there exist a control Lyapunov function (CLF) $\mathcal V\!:\R^n\to\R_0^+$, and constants $ \underline{\alpha},\overline{\alpha},\kappa \in \R^+$, such that
		\begin{itemize}
			\item $\forall x\!\in\! \R^n\!\!:$
			\begin{subequations}
				\begin{equation}\label{eq:ISS-con1-network}
					\underline{\alpha}\Vert x \Vert^2\le \!\mathcal V(x) \le \overline{\alpha}\Vert x\Vert^2,
				\end{equation}
				\item $\forall x\in\! \R^n, \: \exists u\in \R^m\!\!:$
				\begin{align}\label{eq:ISS-con2-network}
					&\mathsf{L} \mathcal V(x)\mathcal ~\leq - \kappa \mathcal V(x),
				\end{align}
			\end{subequations}
		\end{itemize}
		where $x=[x_1;\dots;x_M]$ and $\mathsf{L} \mathcal V$ is the Lie derivative of $\mathcal V:\R^n\to\R_0^+$
		with respect to the dynamics in \eqref{eq:sys2-network}, defined as 
		\begin{align}
			\mathsf{L}\mathcal V(x)=\partial_{x} \mathcal V(x)(A\mathcal F(x) + B u),
		\end{align}
		with $\partial_{x} \mathcal V(x) = \frac{\partial\mathcal V(x) }{\partial_{x}}$. Then the interconnected network
		$\Sigma=\mathcal{I}(\Sigma_1,$ $ \ldots, \Sigma_M)$ is GAS in the sense of Definition~\ref{GAS}.
	\end{theorem}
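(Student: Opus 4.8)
The plan is to derive the $\mathcal{KL}$ estimate of Definition~\ref{GAS} from the two CLF conditions~\eqref{eq:ISS-con1-network}--\eqref{eq:ISS-con2-network} by a standard comparison-lemma argument. First I would fix an arbitrary initial condition $x(0)\in\R^n$ and, by~\eqref{eq:ISS-con2-network}, select for each state value a control input $u$ rendering $\mathsf{L}\mathcal V(x)\le -\kappa\mathcal V(x)$ along the resulting closed-loop solution $x(t)$; along this trajectory the map $t\mapsto \mathcal V(x(t))$ then satisfies the differential inequality $\tfrac{d}{dt}\mathcal V(x(t))\le -\kappa\,\mathcal V(x(t))$.

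Next I would invoke the comparison lemma (Gr\"onwall-type) to integrate this inequality, yielding
\begin{align*}
	\mathcal V(x(t))\le e^{-\kappa t}\,\mathcal V(x(0)),\qquad \forall t\ge 0.
\end{align*}
Then I would sandwich this estimate using the quadratic bounds~\eqref{eq:ISS-con1-network}: from the lower bound, $\underline\alpha\Vert x(t)\Vert^2\le \mathcal V(x(t))$, and from the upper bound, $\mathcal V(x(0))\le\overline\alpha\Vert x(0)\Vert^2$. Chaining these gives $\underline\alpha\Vert x(t)\Vert^2\le e^{-\kappa t}\overline\alpha\Vert x(0)\Vert^2$, hence
\begin{align*}
	\Vert x(t)\Vert\le \sqrt{\tfrac{\overline\alpha}{\underline\alpha}}\;e^{-\frac{\kappa}{2}t}\,\Vert x(0)\Vert.
\end{align*}

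Finally I would define $\beta(r,t):=\sqrt{\overline\alpha/\underline\alpha}\,e^{-\kappa t/2}\,r$ and verify that $\beta\in\mathcal{KL}$: for fixed $t$ it is linear in $r$, hence continuous, strictly increasing, and zero at $r=0$ (a $\mathcal K$ function), and for fixed $r>0$ it is strictly decreasing in $t$ with $\beta(r,t)\to 0$ as $t\to\infty$. This exhibits the required $\mathcal{KL}$ bound $\Vert x(t)\Vert\le\beta(\Vert x(0)\Vert,t)$, so $\Sigma$ is GAS in the sense of Definition~\ref{GAS}. The only subtle point—and the step I would treat most carefully—is the passage from the existential condition~\eqref{eq:ISS-con2-network} (for each $x$ there \emph{exists} a suitable $u$) to a well-defined closed-loop solution for which the differential inequality holds along the whole trajectory; I would handle this by assuming a (measurable/locally Lipschitz) selection of such a controller so that solutions exist and the comparison lemma applies, which is the customary reading of a control Lyapunov function in this setting.
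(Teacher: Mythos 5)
Your proof is correct: the paper itself gives no proof of this theorem (it is imported directly from the cited reference~\citep{sontag1996new}), and your comparison-lemma argument --- integrating $\tfrac{d}{dt}\mathcal V(x(t))\le-\kappa\mathcal V(x(t))$ to get exponential decay of $\mathcal V$, sandwiching with the quadratic bounds~\eqref{eq:ISS-con1-network}, and reading off the explicit $\mathcal{KL}$ function $\beta(r,t)=\sqrt{\overline\alpha/\underline\alpha}\,e^{-\kappa t/2}r$ --- is exactly the standard derivation underlying that result. You were also right to flag the selection of a concrete feedback from the existential condition~\eqref{eq:ISS-con2-network} as the one genuinely delicate point; in this paper it is moot because the controller $u$ is constructed explicitly as a polynomial state feedback, so closed-loop solutions are well defined and the differential inequality holds along them.
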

	
	In general, searching for a CLF to enforce the GAS property over an interconnected network is often computationally expensive, even if the underlying model is known. To alleviate this, one approach is to analyze the stability of the network by leveraging the \emph{input-to-state} stability property of its underlying subsystems, as defined in the following definition.
	
	\begin{definition}
		Given a subsystem $\Sigma_i$, the function $\mathcal V_i:\R^{n_i}\to\R_0^+$ is
		called \emph{input-to-state stable (ISS)} Lyapunov function if there exist constants $\underline{\alpha}_i, \overline{\alpha}_i,\kappa_i \in \R^+, \rho_i \in \R^+_0,$ such that
		\begin{itemize}
			\item  $\forall x_i\!\in\R^{n_i}\!\!:$
			\begin{subequations}
				\begin{align}\label{eq:ISS-con1}
					\underline{\alpha}_i\Vert x_i\Vert^2\le \!\mathcal V_i(x_i) \le \overline{\alpha}_i\Vert x_i\Vert^2,
				\end{align}
				\item $\forall x_i\in\R^{n_i},\: \exists u_i\in \R^{m_i},$ such that $\forall w_i\in \R^{\sigma_i}\!\!:$
				\begin{align}\label{eq:ISS-con2}
					&\mathsf{L}\mathcal V_i(x_i)\mathcal ~\leq - \kappa_i \mathcal V_i(x_i) +  \rho_i\Vert w_i \Vert^2,
				\end{align}
			\end{subequations}
		\end{itemize}
		where $\mathsf{L}\mathcal V_i$ is the Lie derivative of $\mathcal V_i:\R^{n_i}\to\R_0^+$ with respect to the dynamics in \eqref{eq:subsystem}, defined as 
		\begin{align}\label{Lie derivative}
			\mathsf{L}\mathcal V_i(x_i)=\partial_{x_i} \mathcal V_i(x_i)(A_i\mathcal F_i(x_i) + B_iu_i + D_iw_i).
		\end{align}
	\end{definition}
	
	Using the ISS Lyapunov functions of individual subsystems, a CLF for the interconnected network can be constructed by satisfying a compositional condition (cf. Theorem~\ref{th:comp}), rather than treating the network as a monolithic system. However, constructing these ISS functions is infeasible due to the unknown matrices $A_i$ and $B_i$ appearing in \eqref{Lie derivative}. With this primary challenge in place, we can now formally outline the main problem of interest.
	
	\begin{resp}
		\begin{problem}\label{Problem}
			Consider an interconnected network $\Sigma =\mathcal{I}(\Sigma_1, \ldots, $ $\Sigma_M)$, composed of $M$ subsystems $\Sigma_i$, each with unknown matrices $A_i$ and $B_i$. Construct an ISS Lyapunov function $\mathcal{V}_i$ along with an ISS controller $u_i$ for each unknown subsystem by collecting solely a single input-state trajectory from it. Then, develop a compositional approach based on small-gain reasoning to compose these $\mathcal{V}_i$ and construct a control Lyapunov function $\mathcal{V}$, along with its associated controller $u$, across the interconnected network while enforcing its global asymptotic stability. 
		\end{problem}
	\end{resp}
	
	To address Problem \ref{Problem}, we introduce our data-driven scheme, in the following section, to construct ISS Lyapunov functions for unknown subsystems, derived from data.
	
	\section{Data-driven construction of ISS Lyapunov functions}\label{DD-CBCs}
	In our data-driven framework, we consider the structure of our ISS Lyapunov function to be quadratic, taking the form $\mathcal{V}_i(x_i) = x_i^\top P_i x_i$, where $P_i \succ 0$. Subsequently, we gather data from unknown subsystems over the time interval $[t_0 , t_0 + (T - 1)\tau]$, where $T \in \mathbb{N}^+$ denotes the number of collected samples, and $\tau \in \mathbb{R}^+$ represents the sampling time:
	\begin{subequations}\label{New}
		\begin{align}
			\mathcal U^{0,T}_i &= [u_i(t_0)\;\;\, u_i(t_0 + \tau)\;\;\, \dots\;\; u_i(t_0 + (T - 1)\tau)],\label{U0}\\
			\mathcal W^{0,T}_i &= [w_i(t_0)\;\; w_i(t_0 + \tau)\;\; \dots\;\; w_i(t_0 + (T - 1)\tau)],\label{W0}\\
			\mathcal X^{0,T}_i &= [x_i(t_0)\;\;\, x_i(t_0 + \tau)\;\;\, \dots\;\; x_i(t_0 + (T - 1)\tau)],\label{X0}\\
			\mathcal X^{1,T}_i &= [\dot x_i(t_0)\;\;\, \dot x_i(t_0 + \tau)\;\;\, \dots\;\; \dot x_i(t_0 + (T - 1)\tau)].\label{X1}
		\end{align}
	\end{subequations}
	We treat trajectories in~\eqref{New} as a \emph{single input-state trajectory}.
	\begin{remark}\label{remark:derivative}
	The state derivatives in $\mathcal{X}_i^{1,T}$ are not directly measurable but are essential for our framework. They can be estimated using numerical approximation based on the derivative definition or through filtering techniques~\citep{larsson2008estimation, padoan2015towards}. However, both methods introduce manageable approximation errors, which can be addressed using a strategy similar to~\cite{guo2021data}, as detailed in Subsection~\ref{sub:lim}.
	\end{remark}
	Due to the involvement of unknown matrices $A_i$ and $B_i$ in $\mathsf{L}\mathcal{V}_i(x_i)$ in \eqref{Lie derivative}, drawing inspiration from~\citep{depersis2020tac}, we provide the following lemma to derive the data-based representation for $A_i\mathcal F_i(x_i) + B_iu_i $.
	
	\begin{lemma}\label{Lemma1}
		Given a subsystem $\Sigma_i$, consider a matrix polynomial $G_i(x_i)\in\R^{T\times n_i}$ such that 
		\begin{align}\label{Q}
				\aleph_i(x_i) = \mathcal J^{0,T}_iG_i(x_i),
			\end{align}
			with $\aleph_i(x_i)\in\R^{N_i\times n_i}$ being a state-dependent transformation matrix fulfilling
			\begin{align}\label{eq:trans F}
				\mathcal F_i(x_i) = \aleph_i(x_i)x_i,
		\end{align}
		and $\mathcal J^{0,T}_i $ being a full row-rank $(N_i\times T)$ matrix, defined as
		\begin{align}\label{eq:monomial traj}
			\mathcal J^{0,T}_i \!\!= [\mathcal F_i(x_i(t_0))\;\;\mathcal F_i(x_i(t_0 + \tau))\;\;\dots\;\;\mathcal F_i(x_i(t_0 + (T - 1)\tau))].
		\end{align}
		Then, the data-based counterpart of $A_i\mathcal F_i(x_i) + B_iu_i$ is derived as 
	\begin{align}\notag
				A_i\mathcal F_i(x_i) + B_iu_i  &= (\mathcal X^{1,T}_i -D_i \mathcal W^{0,T}_i)G_i(x_i)x_i,
				\\\notag &\text{equivalently:}~\\\label{eq:data-based repr}
				A_i\aleph_i(x_i) + B_i K_i(x_i) &= (\mathcal X^{1,T}_i -D_i \mathcal W^{0,T}_i)G_i(x_i),
			\end{align}
			by designing the local controllers $u_i \!=\! K_i(x_i)x_i \!= \!\mathcal U^{0,T}_i\!G_i(x_i)x_i$.
	\end{lemma}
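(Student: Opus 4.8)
The plan is to begin by fixing the collected data in~\eqref{New}, evaluate the subsystem dynamics~\eqref{eq:subsystem} column-wise along the recorded trajectory, and thereby obtain an exact linear-algebraic identity linking the unknown matrices $A_i, B_i$ to the measured data. Concretely, applying~\eqref{eq:subsystem} at each sampling instant $t_0 + k\tau$ for $k = 0, \ldots, T-1$ and stacking horizontally yields
\begin{align}\label{eq:stacked}
\mathcal X^{1,T}_i = A_i \mathcal J^{0,T}_i + B_i \mathcal U^{0,T}_i + D_i \mathcal W^{0,T}_i,
\end{align}
using that the $k$-th column of $\mathcal J^{0,T}_i$ in~\eqref{eq:monomial traj} is exactly $\mathcal F_i(x_i(t_0 + k\tau))$. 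Rearranging~\eqref{eq:stacked} gives $\mathcal X^{1,T}_i - D_i \mathcal W^{0,T}_i = A_i \mathcal J^{0,T}_i + B_i \mathcal U^{0,T}_i$, which is the key data identity; note $D_i$ is known, so the left-hand side is entirely data.

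Next I would post-multiply this identity by $G_i(x_i)$ and invoke the hypothesis~\eqref{Q}, $\aleph_i(x_i) = \mathcal J^{0,T}_i G_i(x_i)$. This immediately produces
\begin{align}\label{eq:intermediate}
(\mathcal X^{1,T}_i - D_i \mathcal W^{0,T}_i) G_i(x_i) = A_i \mathcal J^{0,T}_i G_i(x_i) + B_i \mathcal U^{0,T}_i G_i(x_i) = A_i \aleph_i(x_i) + B_i \mathcal U^{0,T}_i G_i(x_i).
\end{align}
Defining the controller gain $K_i(x_i) \Let \mathcal U^{0,T}_i G_i(x_i)$ (equivalently $u_i = \mathcal U^{0,T}_i G_i(x_i) x_i$) turns the right-hand side into $A_i \aleph_i(x_i) + B_i K_i(x_i)$, which is precisely the second (equivalent) claim~\eqref{eq:data-based repr}. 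Finally, right-multiplying~\eqref{eq:intermediate} by the state vector $x_i$ and using the transformation property~\eqref{eq:trans F}, $\mathcal F_i(x_i) = \aleph_i(x_i) x_i$, yields $A_i \mathcal F_i(x_i) + B_i K_i(x_i) x_i = (\mathcal X^{1,T}_i - D_i \mathcal W^{0,T}_i) G_i(x_i) x_i$, i.e.\ $A_i \mathcal F_i(x_i) + B_i u_i$ equals the claimed data-based expression, closing the argument.

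The conceptual content here is light — the proof is essentially a substitution chain — so there is no deep obstacle, but I would be careful about two points. First, the existence of a matrix polynomial $G_i(x_i)$ satisfying~\eqref{Q} is not automatic: it requires $\aleph_i(x_i)$ to lie in the column space (over the ring of polynomials in $x_i$) of $\mathcal J^{0,T}_i$; this is exactly where the full row-rank assumption on $\mathcal J^{0,T}_i$ enters, since a full-row-rank $N_i \times T$ matrix admits a right inverse $\mathcal J^{0,T\ddagger}_i$ and one may take $G_i(x_i) = \mathcal J^{0,T\ddagger}_i \aleph_i(x_i)$ as a particular solution (with the freedom to add any polynomial matrix whose columns lie in the kernel). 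Second, I would make explicit that~\eqref{eq:stacked} is an \emph{exact} identity only under the idealized assumption that the columns of $\mathcal X^{1,T}_i$ are the true derivatives $\dot x_i(t_0 + k\tau)$; the approximation error incurred by estimating these derivatives (Remark~\ref{remark:derivative}) is deferred to the robustness analysis in Subsection~\ref{sub:lim} and does not affect the algebraic statement of the lemma.
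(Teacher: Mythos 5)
Your proposal is correct and follows essentially the same argument as the paper: both rest on the stacked data identity $\mathcal X^{1,T}_i - D_i\mathcal W^{0,T}_i = A_i\mathcal J^{0,T}_i + B_i\mathcal U^{0,T}_i$, the substitution of~\eqref{Q}, the definition $K_i(x_i)=\mathcal U^{0,T}_iG_i(x_i)$, and the transformation~\eqref{eq:trans F}; you merely traverse the chain from data to dynamics while the paper expands $A_i\mathcal F_i(x_i)+B_iu_i$ toward the data, which is the same computation in reverse. Your side observations on the existence of $G_i(x_i)$ via the right pseudoinverse and on the exactness of the derivative data match the paper's own remarks following the lemma.
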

	
	\begin{proof}
		From trajectories in~\eqref{New}, we have
	\begin{align*} 
		\mathcal X^{1,T}_i &= A_i \mathcal J^{0,T}_i \!+ \!B_i\mathcal U^{0,T}_i \!+\! D_i\mathcal W^{0,T}_i = [B_i\quad A_i] \begin{bmatrix}
			\mathcal U^{0,T}_i\\
			\mathcal J^{0,T}_i
		\end{bmatrix} + D_i\mathcal W^{0,T}_i.
	\end{align*}
	Accordingly, 
	\begin{align} \label{new2}
		[B_i\quad A_i] \begin{bmatrix}
			\mathcal U^{0,T}_i\\
			\mathcal J^{0,T}_i
		\end{bmatrix} = \mathcal X^{1,T}_i - D_i\mathcal W^{0,T}_i.
	\end{align}
	By utilizing the local controller $u_i \!=\! K_i(x_i)x_i \! =\! \underbrace{\mathcal U^{0,T}_i \! G_i(x_i)}_{K_i(x_i)}x_i$, one has
	\begin{align}\notag
			A_i\mathcal F_i(x_i) + B_iu_i  \!&=\! A_i \aleph_i(x_i)x_i+ B_iu_i \! =\! (A_i\aleph_i(x_i) + B_iK_i(x_i))x_i\\\notag
			&= [B_i\quad A_i] \begin{bmatrix}
				K_i(x_i)\\
				\mathds \aleph_i(x_i)
			\end{bmatrix}x_i \\\label{new3}
			&= [B_i\quad A_i] \begin{bmatrix}
				\mathcal U^{0,T}_i\\
				\mathcal J^{0,T}_i
			\end{bmatrix}G_i(x_i) x_i.
	\end{align}
	The last equality in~\eqref{new3} is established due to~\eqref{Q} and since $K_i(x_i) =  \mathcal U^{0,T}_iG_i(x_i)$. Now by applying~\eqref{new2} to~\eqref{new3}, we have 
	\begin{align}\notag
			A_i\mathcal F_i(x_i) + B_iu_i &= (\mathcal X^{1,T}_i -D_i \mathcal W^{0,T}_i)G_i(x_i)x_i, \\\notag
			&\text{equivalently:}~\\\notag
			A_i \aleph_i(x_i)+ B_i K_i(x_i) &= (\mathcal X^{1,T}_i -D_i \mathcal W^{0,T}_i)G_i(x_i),
	\end{align}
	which concludes the proof. 
	\end{proof}
	
	\begin{remark}
		To guarantee that $\mathcal{J}^{0,T}_i$ in~\eqref{eq:monomial traj} achieves full row rank, the number of samples $T$ should not be less than or equal to $N_i$. This assumption is required to ensure that~\eqref{Q} is feasible and can be readily confirmed since the matrix $\mathcal{J}^{0,T}_i$ is derived from sampled data. This assumption also complies with Willem et al's fundamental lemma~\cite{willems2005note} regarding the persistently exciting data.
	\end{remark}
	
	\begin{remark}Note that the full row-rank condition on~\eqref{eq:monomial traj} is necessary and sufficient for the feasibility of~\eqref{Q}. To substantiate this claim, we define $G_i(x_i):=\Upsilon_{1i}\Upsilon_{2i}(x_i)$, where $\Upsilon_{1i}\in\mathbb{R}^{T\times N_i}$ and $\Upsilon_{2i}(x_i)\in\mathbb{R}^{N_i\times n_i}$. Since $\mathcal{J}_i^{0,T}$ is assumed to be a full row-rank matrix, its right pseudo-inverse $(\mathcal{J}_i^{0,T})^\ddagger$ always exists. Thus, we can set $\Upsilon_{1i}=(\mathcal{J}_i^{0,T})^\ddagger$ and $\Upsilon_{2i}(x_i)=\aleph_i(x_i)$. Consequently, this gives $G_i(x_i)=\Upsilon_{1i}\Upsilon_{2i}(x_i)=(\mathcal{J}_i^{0,T})^\ddagger\aleph_i(x_i)$, leading to $\mathcal{J}_i^{0,T}G_i(x_i)=\mathcal{J}_i^{0,T}(\mathcal{J}_i^{0,T})^\ddagger\aleph_i(x_i)=\mathds{I}_{N_i}\aleph_i(x_i)=\aleph_i(x_i)$, which confirms the existence of a matrix $G_i(x_i)$ satisfying~\eqref{Q} when $\mathcal{J}_i^{0,T}$ is full row-rank.
	\end{remark}
	
	By leveraging the data-based representation of $A_i \aleph_i(x_i)+ B_i K_i(x_i)$ as described in Lemma~\ref{Lemma1}, we introduce the following theorem, as one of the central contributions of our work,  to construct ISS Lyapunov functions alongside their corresponding local controllers directly from data for unknown subsystems. 
	
	\begin{theorem}\label{Thm:main3}
		Given a ct-NPS $\Sigma_i$ in \eqref{eq:subsystem}, with matrices $A_i$ and $B_i$ being unknown, consider the data-based representation~\eqref{eq:data-based repr}
		as in Lemma~\ref{Lemma1}. Let  there exist $P_i\in\R^{n_i\times n_i}$ and a matrix polynomial
		$\Phi_i(x_i) \in \R^{T \times n_i}$ such that
		\begin{subequations}
			\begin{align}\label{eq:con1 SOS}
				& \mathcal J_i^{0,T}\Phi_i(x_i) = \aleph_i(x_i)P_i^{-1}, \quad \text{with} ~ P_i \succ 0.
			\end{align}
			If 
			\begin{align}\label{eq:con2 SOS}
				\mathcal C \preceq -\kappa_i P_i^{-1},
			\end{align}
		\end{subequations}
		where
		\begin{align*}
			\mathcal C =  ~\!&(\mathcal X^{1,T}_i - D_i\mathcal W^{0,T}_i) \Phi_i(x_i) + \Phi_i(x_i)^\top(\mathcal X^{1,T}_i - D_i\mathcal W^{0,T}_i)^\top + \pi_i \mathds I_{n_i},
		\end{align*}
		for some $\kappa_i, \pi_i\in \R^+$, then $\mathcal V_i(x_i) = x_i^\top [\aleph_i^\dagger\mathcal J_i^{0,T}\Phi_i(x_i)]^{-1}x_i$ is an ISS Lyapunov function for $\Sigma_i$ with $\underline{\alpha}_i = \lambda_{\min}(P_i)$ and $\overline{\alpha}_i = \lambda_{\max}(P_i)$, $ \rho_i = \frac{\Vert D_i\Vert^2}{\pi_i}$, and $u_i = \mathcal U^{0,T}_i\Phi_i(x_i)[\aleph_i^\dagger\mathcal J_i^{0,T}\Phi_i(x_i)]^{-1}x_i$ is its local ISS controller.
	\end{theorem}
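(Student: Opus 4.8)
The plan is to verify the two ISS Lyapunov conditions \eqref{eq:ISS-con1} and \eqref{eq:ISS-con2} for the candidate $\mathcal V_i(x_i) = x_i^\top P_i x_i$, where we first observe that condition \eqref{eq:con1 SOS} identifies the bracketed expression $[\aleph_i^\dagger \mathcal J_i^{0,T}\Phi_i(x_i)]^{-1}$ with $P_i$: left-multiplying \eqref{eq:con1 SOS} by $\aleph_i^\dagger$ and using $\aleph_i^\dagger \aleph_i = \mathds I_{n_i}$ (which holds because $\aleph_i(x_i)$ has full column rank $n_i$ by virtue of \eqref{eq:trans F} with $x_i$ arbitrary) gives $\aleph_i^\dagger \mathcal J_i^{0,T}\Phi_i(x_i) = P_i^{-1}$, hence $\mathcal V_i(x_i) = x_i^\top P_i x_i$. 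The norm bounds \eqref{eq:ISS-con1} with $\underline\alpha_i = \lambda_{\min}(P_i)$ and $\overline\alpha_i = \lambda_{\max}(P_i)$ are then immediate from $P_i \succ 0$, using $\lambda_{\min}(P_i)\|x_i\|^2 \le x_i^\top P_i x_i \le \lambda_{\max}(P_i)\|x_i\|^2$.

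For the decay condition \eqref{eq:ISS-con2}, I would compute the Lie derivative along \eqref{Lie derivative} with the proposed controller $u_i = K_i(x_i)x_i = \mathcal U^{0,T}_i \Phi_i(x_i) P_i x_i$ (note $\Phi_i(x_i)P_i$ plays here the role of $G_i(x_i)$ from Lemma~\ref{Lemma1}). Since $\partial_{x_i}\mathcal V_i(x_i) = 2x_i^\top P_i$, we get
\begin{align*}
\mathsf L \mathcal V_i(x_i) = 2 x_i^\top P_i \big(A_i\mathcal F_i(x_i) + B_i u_i + D_i w_i\big).
\end{align*}
Applying Lemma~\ref{Lemma1} with $G_i(x_i) = \Phi_i(x_i)P_i$, the term $A_i\mathcal F_i(x_i) + B_i u_i$ equals $(\mathcal X^{1,T}_i - D_i\mathcal W^{0,T}_i)\Phi_i(x_i)P_i x_i$; to invoke the lemma I must check its hypothesis $\aleph_i(x_i) = \mathcal J^{0,T}_i G_i(x_i)$, i.e. $\aleph_i(x_i) = \mathcal J^{0,T}_i \Phi_i(x_i) P_i$, which is exactly \eqref{eq:con1 SOS} right-multiplied by $P_i$. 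Substituting, and abbreviating $\mathcal B_i \Let (\mathcal X^{1,T}_i - D_i\mathcal W^{0,T}_i)\Phi_i(x_i)$, yields
\begin{align*}
\mathsf L \mathcal V_i(x_i) = x_i^\top P_i \mathcal B_i P_i x_i + x_i^\top P_i \mathcal B_i^\top P_i x_i + 2 x_i^\top P_i D_i w_i = x_i^\top P_i(\mathcal B_i + \mathcal B_i^\top)P_i x_i + 2 x_i^\top P_i D_i w_i.
\end{align*}
For the cross term I would use Young's inequality $2 x_i^\top P_i D_i w_i \le \pi_i x_i^\top P_i P_i x_i + \frac{1}{\pi_i} w_i^\top D_i^\top D_i w_i \le \pi_i \|P_i x_i\|^2 + \frac{\|D_i\|^2}{\pi_i}\|w_i\|^2$, so that $\mathsf L\mathcal V_i(x_i) \le x_i^\top P_i\,\mathcal C\,P_i x_i + \rho_i\|w_i\|^2$ with $\mathcal C$ as defined in the theorem and $\rho_i = \|D_i\|^2/\pi_i$. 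Finally, condition \eqref{eq:con2 SOS}, $\mathcal C \preceq -\kappa_i P_i^{-1}$, conjugated by $P_i$ gives $P_i \mathcal C P_i \preceq -\kappa_i P_i$, hence $x_i^\top P_i \mathcal C P_i x_i \le -\kappa_i x_i^\top P_i x_i = -\kappa_i \mathcal V_i(x_i)$, which establishes \eqref{eq:ISS-con2}.

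I expect the main obstacle to be the bookkeeping around the pseudoinverse identity $\aleph_i^\dagger \aleph_i = \mathds I_{n_i}$ and, more subtly, making precise in what sense \eqref{eq:con1 SOS} is a well-posed constraint on $P_i$ (it forces $\mathcal J_i^{0,T}\Phi_i(x_i)$ to be a state-independent constant matrix equal to $\aleph_i P_i^{-1}$, which is a polynomial identity in $x_i$ to be imposed coefficient-wise); the rest is matrix manipulation plus a single Young's inequality. One should also note the slight notational sleight that the SOS constraint \eqref{eq:con2 SOS} is naturally an inequality on the symmetric matrix $\mathcal C$, which is matrix-polynomial-valued in $x_i$, so $\preceq$ is to be read pointwise in $x_i$ and enforced via SOS relaxation — but that is an implementation remark rather than a gap in the proof.
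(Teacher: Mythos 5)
Your proposal is correct and follows essentially the same route as the paper's proof: identify $P_i=[\aleph_i^\dagger\mathcal J_i^{0,T}\Phi_i(x_i)]^{-1}$ via \eqref{eq:con1 SOS}, get the quadratic bounds from the eigenvalues of $P_i$, invoke Lemma~\ref{Lemma1} with $G_i(x_i)=\Phi_i(x_i)P_i$, symmetrize the quadratic form, apply Young's inequality to the $D_iw_i$ cross term, and conjugate \eqref{eq:con2 SOS} by $P_i$. Your explicit justification of $\aleph_i^\dagger\aleph_i=\mathds I_{n_i}$ and of why the lemma's hypothesis holds is a small added nicety over the paper's version, not a different argument.
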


	\begin{proof}
		We first proceed with showing condition~\eqref{eq:ISS-con1}. Since
	\begin{align*}
		\lambda_{\min}(P_i)\Vert x_i\Vert^2 &\leq \underbrace{x_i^\top \overbrace{ [\aleph_i^\dagger\mathcal J_i^{0,T}\Phi_i(x_i)]^{-1}}^{P_i} x_i}_{\mathcal V_i(x_i)} \leq \lambda_{\max}(P_i)\Vert x_i \Vert ^2,
	\end{align*}
	one can readily satisfy condition~\eqref{eq:ISS-con1} by choosing $\underline{\alpha}_i = \lambda_{\min}(P_i)$ and $\overline{\alpha}_i = \lambda_{\max}(P_i)$.
	
	We now move forward by demonstrating condition~\eqref{eq:ISS-con2}, as well. Since $\aleph_i(x_i)P_i^{-1}\!\! = \!\! \mathcal J_i^{0,T}\Phi_i(x_i)$ according to~\eqref{eq:con1 SOS}, then $\aleph_i(x_i) \!=\! \mathcal J_i^{0,T}\Phi_i(x_i) P_i $. As $\aleph_i(x_i) \!=\! \mathcal J_i^{0,T}G_i(x_i)$ according to~\eqref{Q}, then $G_i(x_i)$ $= \Phi_i(x_i)P_i$ can be a proper choice, and accordingly, $G_i(x_i) P_i^{-1} = \Phi_i(x_i)$. Given that $A_i \aleph_i(x_i)+ B_iK_i(x_i) = (\mathcal X^{1,T}_i - D_i\mathcal W^{0,T}_i)G_i(x_i)$ as per Lemma~\ref{Lemma1}, we have
	\begin{align}\label{new11}
		(A_i \aleph_i(x_i)+ B_iK_i(x_i))P_i^{-1} &= (\mathcal X^{1,T}_i - D_i\mathcal W^{0,T}_i)\underbrace{G_i(x_i)P_i^{-1}}_{\Phi_i(x_i)} \notag\\
		&=(\mathcal X^{1,T}_i - D_i\mathcal W^{0,T}_i) \Phi_i(x_i).
	\end{align}
	By leveraging the definition of Lie derivative in \eqref{Lie derivative}, and since $P_i = [\aleph_i^\dagger\mathcal J_i^{0,T}\Phi_i(x_i)]^{-1}$ as per~\eqref{eq:con1 SOS}, we have
	\begin{align*}
		\mathsf{L}\mathcal V_i(x_i)=&~\partial_{x_i} \mathcal V_i(x_i)(A_i\mathcal F_i(x_i) + B_i u_i + D_iw_i)\\
		=&~2x_i^\top P_i(A_i\mathcal F_i(x_i) + B_i u_i + D_iw_i) \\
		=&~2x_i^\top P_i\big((A_i\aleph_i(x_i)+B_i K_i(x_i))x_i + D_iw_i\big)\\
		=&~2x_i^\top P_i(A_i\aleph_i(x_i)+B_i K_i(x_i))x_i +2 \underbrace{x_i^\top P_i}_{a_i}\underbrace{D_iw_i}_{b_i}.
	\end{align*}
	Using Cauchy-Schwarz inequality~\citep{bhatia1995cauchy} as ${a_i} b_i \leq \Vert a_i \Vert \Vert b_i\Vert,$ for any ${a_i^\top},b_i\in \mathbb R^n$, and by leveraging Young's inequality~\citep{young1912classes} as $\Vert a_i\Vert \Vert b_i\Vert\leq \frac{\pi_i}{2}\Vert a_i\Vert^2+\frac{1}{2\pi_i}\Vert b_i\Vert^2,$ for any $\pi_i>0$, one can attain
	\begin{align*}
		\mathsf{L}\mathcal V_i(x_i)\leq &~2x_i^\top P_i(A_i\aleph_i(x_i)+B_i K_i(x_i))x_i \\
		&+ \pi_i x_i^\top P_i  P_ix_i+ \frac{1}{\pi_i}\Vert D_iw_i\Vert^2.
	\end{align*}
	By expanding the above expression and factorizing the term $x_i^\top P_i$ from left and $P_i x_i$ from right, and utilizing Cauchy-Schwarz inequality for the last term, we have
	\begin{align*}
		\mathsf{L}\mathcal V_i(x_i)\leq&~x_i^\top P_i\Big[(A_i\aleph_i(x_i)+B_i K_i(x_i))P_i^{-1}\\
		&+ P_i^{-1} (A_i\aleph_i(x_i)+B_i K_i(x_i))^\top + \pi_i \mathds I_{n_i}\Big] P_i x_i\\
		& + \frac{1}{\pi_i}\Vert D_i\Vert^2\Vert w_i \Vert^2.
	\end{align*}
	Since $(A_i \aleph_i(x_i)+ B_iK_i(x_i))P_i^{-1} = \big(\mathcal X^{1,T}_i - D_i\mathcal W^{0,T}_i\big) \Phi_i(x_i)$ as per \eqref{new11}, one can conclude that
	\begin{align*}
		\mathsf{L}\mathcal V_i(x_i) \leq& ~x_i^\top \!P_i\Big[\big(\mathcal X^{1,T}_i \!-\! D_i\mathcal W^{0,T}_i\big) \Phi_i(x_i)\!\\
		&+ \Phi_i(x_i)^\top\!\big(\mathcal X^{1,T}_i \!-\! D_i\mathcal W^{0,T}_i\big)^\top \!+ \pi_i \mathds I_{n_i} \Big] P_i x_i \\
		&+ \frac{1}{\pi_i}\Vert D_i\Vert^2\Vert w_i \Vert^2.
	\end{align*}
	Now, by considering condition \eqref{eq:con2 SOS}, one has
	\begin{align*}
		\mathsf{L}\mathcal V_i(x_i) \leq &  - \kappa_i x_i^\top P_i\overbrace{P_i^{-1} P_i}^{\mathds I_{n_i}} x_i+ \frac{1}{\pi_i}\Vert D_i\Vert^2\Vert w_i \Vert^2 \\
		= &- \kappa_i \mathcal V_i(x_i) +  \rho_i\Vert w_i\Vert^2,
	\end{align*}
	satisfying condition~\eqref{eq:ISS-con2} with $ \rho_i = \frac{\Vert D_i\Vert^2}{\pi_i}$.  Then one can conclude that $\mathcal V_i(x_i) = x_i^\top [\aleph_i^\dagger\mathcal J_i^{0,T}\Phi_i(x_i)]^{-1}x_i$ is an ISS Lyapunov function for $\Sigma_i$ and $u_i = \mathcal U^{0,T}_i\Phi_i(x_i)[\aleph_i^\dagger\mathcal J_i^{0,T}\Phi_i(x_i)]^{-1}x_i$ is its local ISS controller, thereby completing the proof.
	\end{proof}
	\begin{remark}
			Of note is that condition~\eqref{eq:trans F} is essential to our approach, as it enables everything to be expressed in terms of $x_i$ rather than $\mathcal F_i(x_i)$. This aligns with the structure of our ISS Lyapunov function, $\mathcal V_i(x_i) = x_i^\top P_i x_i $, and aids in proving our main results in Theorem~\ref{Thm:main3}. Given that $\mathcal F_i(0) = 0$, there always exists a transformation $\aleph_i(x_i)$ that meets condition~\eqref{eq:trans F}, enabling us to define $\mathcal F_i(x_i)$ accordingly.
	\end{remark}
	\begin{remark}\label{SOS}
		In order to enforce conditions~\eqref{eq:con1 SOS}-\eqref{eq:con2 SOS}, one can utilize existing software tools in the literature such as \textsf{SOSTOOLS} \citep{papachristodoulou2013sostools}  along with a semidefinite programming (SDP) solver such as \textsf{SeDuMi}~\citep{sturm1999using}.
	\end{remark}

	In the next section, we propose a compositional condition using \emph{small-gain reasoning} to construct a control Lyapunov function for the network along with its controller, leveraging ISS Lyapunov functions of its individual subsystems derived from data, thereby ensuring the network's satisfaction of the GAS property.
	
	\section{Compositional construction of CLF for interconnected network}
	We analyze the network $\Sigma =\mathcal{I}(\Sigma_1, \ldots, \Sigma_M)$ by deriving a small-gain type compositional condition, exploring the construction of a CLF for the network based on data-driven ISS Lyapunov functions of its subsystems. To do so, we first define $\mathcal K:=\mathsf{diag}(\kappa_1, \dots, \kappa_M)$ and $\hat \varrho:=\{\hat{\rho}_{i j}\}$ with $\hat{\rho}_{i j}=\frac{\rho_i}{\underline{\alpha}_j}$, where $\hat{\rho}_{i i}=0,$ for any $i \in\{1, \dots, M\}$.
	
	We offer the following theorem to demonstrate the conditions under which one can construct a CLF for the unknown network using data-driven ISS Lyapunov functions of the unknown subsystems.
	
	\begin{theorem}\label{th:comp}
		Consider a network $\Sigma=\mathcal{I}(\Sigma_1, \dots, \Sigma_M)$, comprised of $M \in \mathbb{N}^{+}$ subsystems $\Sigma_i$. Let each $\Sigma_i$ possess an ISS Lyapunov function $\mathcal V_i$ constructed from data, according to Theorem \ref{Thm:main3}. If
		\begin{align}\label{eq:comp con}
			&\mathds{1}_M^\top(-\mathcal K+\hat \varrho):=[\mu_1 ; \dots ; \mu_M]^{\top}<0 \\
			&\text{equivalently, } \quad \mu_i<0, ~\forall i \in\{1, \dots, M\},\notag
		\end{align}
		then
		\begin{align*}
			\mathcal V(x):=\sum_{i=1}^M \mathcal V_i( x_i) = \sum_{i=1}^M x_i^\top [\aleph_i^\dagger\mathcal J_i^{0,T}\Phi_i(x_i)]^{-1}x_i
		\end{align*}
		is a control Lyapunov function for network $\Sigma$ with
		\begin{align*}
			\kappa:=-\mu,\max _{1 \leq i \leq M} \mu_i<\mu<0,\quad\underline{\alpha}\Let \min_{i}\{\underline{\alpha}_i\},\quad \overline{\alpha}\Let \max_i\{\overline{\alpha}_i\}.
		\end{align*}
		Moreover, $ u \!=\! [u_1;\dots; u_M]$ with
		\begin{align*}
			u_i \!=\! \mathcal U^{0,T}_i\Phi_i(x_i) [\aleph_i^\dagger\mathcal J_i^{0,T}\!\Phi_i(x_i)]^{-1}\!x_i,\;\; i\in\{1,\dots, M\},
		\end{align*}
		is a controller that renders the network GAS. 
	\end{theorem}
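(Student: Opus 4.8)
The plan is to verify the two defining properties of a CLF from the Theorem of \citep{sontag1996new}, namely the quadratic sandwich bound \eqref{eq:ISS-con1-network} and the decrease condition \eqref{eq:ISS-con2-network}, by summing the corresponding ISS properties of the individual subsystems and then closing the interconnection via the small-gain condition \eqref{eq:comp con}. First I would establish the sandwich bound: since $\mathcal V(x)=\sum_{i=1}^M\mathcal V_i(x_i)$ and each $\mathcal V_i$ satisfies $\underline\alpha_i\Vert x_i\Vert^2\le\mathcal V_i(x_i)\le\overline\alpha_i\Vert x_i\Vert^2$ by Theorem~\ref{Thm:main3}, I would use $\underline\alpha\le\underline\alpha_i$, $\overline\alpha_i\le\overline\alpha$ for all $i$ together with $\sum_i\Vert x_i\Vert^2=\Vert x\Vert^2$ to conclude $\underline\alpha\Vert x\Vert^2\le\mathcal V(x)\le\overline\alpha\Vert x\Vert^2$. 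This step is routine.

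Next, for the decrease condition, I would compute the Lie derivative of $\mathcal V$ along \eqref{eq:sys2-network} with the composite controller $u=[u_1;\dots;u_M]$. Because $\mathcal V$ is a sum over $i$ and each $\mathcal V_i$ depends only on $x_i$, the Lie derivative splits as $\mathsf L\mathcal V(x)=\sum_{i=1}^M\mathsf L\mathcal V_i(x_i)$, where each term is exactly the Lie derivative of $\mathcal V_i$ with respect to the subsystem dynamics \eqref{eq:subsystem} under the interconnection constraint \eqref{Internal}. Invoking \eqref{eq:ISS-con2} from Theorem~\ref{Thm:main3} with the local controller $u_i$ gives $\mathsf L\mathcal V_i(x_i)\le-\kappa_i\mathcal V_i(x_i)+\rho_i\Vert w_i\Vert^2$. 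The key substitution is the interconnection structure: by \eqref{Internal}, $w_i=[x_1;\dots;x_{i-1};x_{i+1};\dots;x_M]$, so $\Vert w_i\Vert^2=\sum_{j\neq i}\Vert x_j\Vert^2$. Then I would bound $\Vert x_j\Vert^2\le\frac{1}{\underline\alpha_j}\mathcal V_j(x_j)$ using the lower sandwich bound on $\mathcal V_j$, yielding $\mathsf L\mathcal V(x)\le\sum_{i=1}^M\big(-\kappa_i\mathcal V_i(x_i)+\rho_i\sum_{j\neq i}\tfrac{1}{\underline\alpha_j}\mathcal V_j(x_j)\big)$.

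The crux is to reorganize this double sum into the matrix form $\mathds 1_M^\top(-\mathcal K+\hat\varrho)\,[\mathcal V_1(x_1);\dots;\mathcal V_M(x_M)]$, recognizing that swapping the order of summation turns the coefficient of $\mathcal V_i(x_i)$ into $-\kappa_i+\sum_{j\neq i}\hat\rho_{ji}=\mu_i$ (using $\hat\rho_{ji}=\rho_j/\underline\alpha_i$ and $\hat\rho_{ii}=0$, so the $i$-th column sum of $-\mathcal K+\hat\varrho$ is exactly $\mu_i$). Hence $\mathsf L\mathcal V(x)\le\sum_{i=1}^M\mu_i\mathcal V_i(x_i)$. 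By the small-gain condition \eqref{eq:comp con}, every $\mu_i<0$, so picking $\mu$ with $\max_i\mu_i<\mu<0$ gives $\mu_i\le\mu$ for all $i$, and therefore $\mathsf L\mathcal V(x)\le\mu\sum_{i=1}^M\mathcal V_i(x_i)=\mu\,\mathcal V(x)=-\kappa\,\mathcal V(x)$ with $\kappa:=-\mu>0$. This establishes \eqref{eq:ISS-con2-network}, so $\mathcal V$ is a CLF and the composite $u$ renders $\Sigma$ GAS in the sense of Definition~\ref{GAS}.

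The main obstacle I anticipate is purely bookkeeping: getting the index transposition right so that the row-vector multiplication $\mathds 1_M^\top(-\mathcal K+\hat\varrho)$ with entries $\hat\rho_{ij}=\rho_i/\underline\alpha_j$ indeed produces column sums $\mu_i$ matching the coefficients that arise after the sum-swap — i.e.\ reconciling the "$\rho_i$ weights $\Vert w_i\Vert^2$" indexing with the "$\hat\rho_{ij}$ scales $\mathcal V_j$" indexing. There is no deep analytic difficulty; the ISS-to-CLF passage is standard once the small-gain inequality is in the stated linear form, and the quadratic (rather than general $\mathcal{KL}$) structure makes all comparison-function manipulations trivial.
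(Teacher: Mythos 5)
Your proposal is correct and follows essentially the same route as the paper's proof: the min/max sandwich bound, the additive splitting of the Lie derivative, the substitution $\Vert w_i\Vert^2=\sum_{j\neq i}\Vert x_j\Vert^2$ via the interconnection constraint, the bound $\Vert x_j\Vert^2\le\mathcal V_j(x_j)/\underline{\alpha}_j$, and the reorganization of the double sum into $\mathds 1_M^\top(-\mathcal K+\hat\varrho)\overline{\mathcal V}(x)\le\mu\,\mathcal V(x)$. Your index-transposition bookkeeping (identifying $\mu_i$ with the $i$-th column sum of $-\mathcal K+\hat\varrho$) is exactly the step the paper performs implicitly, and your direct bound $\mu_i\le\mu$ is a slightly cleaner phrasing of the paper's argument via the constrained maximum defining $-\kappa s$.
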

	\begin{proof}
		We first show that condition~\eqref{eq:ISS-con1-network} is satisfied. According to condition~\eqref{eq:ISS-con1}, one has
	\begin{align*}
		\Vert x \Vert^2 = \sum_{i = 1}^{M}\Vert x_i \Vert^2 \leq \sum_{i = 1}^{M} \frac{1}{\underline{\alpha}_i}\mathcal V_i(x_i) \leq \eta \sum_{i = 1}^{M} \mathcal V_i(x_i) = \eta \mathcal V(x),
	\end{align*}
	where $\eta = \underset{i}{\max}\{\frac{1}{\underline{\alpha}_i}\}$. Then by choosing $\underline{\alpha} = \frac{1}{\eta} = \underset{i}{\min}\{\underline{\alpha}_i\}$, we have
	\begin{align*}
		\underline{\alpha}\Vert x \Vert^2 \leq \mathcal V(x).
	\end{align*}
	On the other side, we have
	\begin{align*}
		\mathcal V(x) \!=\! \sum_{i = 1}^{M} \mathcal V_i(x_i) \!\leq \!\sum_{i = 1}^{M} \!\overline{\alpha}_i \Vert x_i \Vert^2 \leq \overline{\alpha}\! \sum_{i = 1}^{M} \!\Vert x_i \Vert^2 \!= \!\overline{\alpha} \Vert x \Vert^2,
	\end{align*}
	with $\overline{\alpha}=\underset{i}{\max}\{\overline{\alpha}_i\}$, implying that condition~\eqref{eq:ISS-con1-network} is fulfilled with $\underline{\alpha} = \underset{i}{\min}\{\underline{\alpha}_i\}$ and $\overline{\alpha}=\underset{i}{\max}\{\overline{\alpha}_i\}$.
	
	We now proceed with demonstrating that condition~\eqref{eq:ISS-con2-network} is met, as well. By employing condition~\eqref{eq:ISS-con2}, compositional condition $\mathds 1_M^{\top}(-\mathcal K+\hat \varrho)<0$, and by defining
	\begin{align*}
		-\kappa s & :=\max \big\{\mathds 1_M^{\top}(-\mathcal K+\hat \varrho) \overline{\mathcal V}(x) \mid \mathds 1_M^{\top} \overline{\mathcal V}(x)=s\big\},
	\end{align*}
	where $\overline{\mathcal V}(x)=[\mathcal V_1( x_1) ; \dots ; \mathcal V_M(x_M)]$, one can obtain the following chain of inequalities:
	\begin{align*}
		&\mathsf{L}\mathcal V(x) \\
		&= \mathsf{L}\sum_{i=1}^{M} \mathcal V_i(x_i) = \sum_{i=1}^{M} \mathsf{L}\mathcal V_i(x_i) \leq \!\!\sum_{i=1}^{M}\big(-\kappa_i \mathcal V_i(x_i)+\rho_i\Vert w_i\Vert^2\big)\\
		&= \!\!\sum_{i=1}^{M}\!\big(\!-\kappa_i \mathcal V_i(x_i)\!+\!\!\sum_{\substack{j=1 \\ j\neq i}}^{M}\!\rho_i\Vert w_{ij}\Vert^2\big) =\!\! \sum_{i=1}^{M}\!\big(\!-\kappa_i \mathcal V_i(x_i)+\!\!\sum_{\substack{j=1 \\ j\neq i}}^{M}\!\rho_i\Vert x_j\Vert^2\big)\\
		&\leq \!\!\sum_{i=1}^{M}\!\big(\!-\kappa_i \mathcal V_i(x_i)+\!\!\sum_{\substack{j=1 \\ j\neq i}}^{M}\!\frac{\rho_i}{\underline{\alpha}_j}\!\mathcal V_j(x_j)\big)~\!=\!\!\sum_{i=1}^{M}\!\big(\!-\kappa_i \mathcal V_i(x_i)+\!\!\sum_{\substack{j=1 \\ j\neq i}}^{M}\!\hat \rho_{i j}\!\mathcal V_j(x_j)\big)\\
		&= \mathds 1_M^\top(-\mathcal K + \hat \varrho)\underbrace{[\mathcal V_1(x_1); \dots; \mathcal V_M(x_M)]}_{\overline{\mathcal V}(x)}\leq-\kappa \mathcal V(x).
	\end{align*}
	We now show that $\kappa>0$ by considering $\kappa = -\mu$. Since $\mathds 1_M^{\top}(-\mathcal K+\hat \varrho):=[\mu_1 ; \dots ; \mu_M]^{\top}<0$, and $\underset{1 \leq i \leq M}{\max}\mu_i<\mu<0$, one has
	\begin{align*}
		-\kappa s&=\mathds 1_M^{\top}(-\mathcal K+\hat \varrho) \overline{\mathcal V}(x)=[\mu_1 ; \dots ; \mu_M]^{\top}[\mathcal V_1(x_1);\dots ; \mathcal V_M(x_M)]\\
		&=\mu_1 \mathcal V_1(x_1)+\dots+\mu_M\mathcal V_M(x_M) \leq \mu\big(\mathcal V_1(x_1)+\dots+\mathcal V_M(x_M)\big)\\
		&=\mu s.
	\end{align*}
	Then, $-\kappa s \leq \mu s$, and accordingly, $-\kappa \leq \mu$ as $s$ is positive. Since $\underset{1 \leq i \leq M}{\max}\mu_i<\mu<0$, then $\kappa=-\mu > 0$, implying that condition~\eqref{eq:ISS-con2-network} is fulfilled, which completes the proof.
	\end{proof}
	
	We provide Algorithm~\ref{Alg:1} as a summary of our \emph{compositional data-driven} approach for constructing a CLF and its GAS controller for an unknown network, relying on ISS Lyapunov functions of its individual subsystems.
	
	\section{Discussions}
	\begin{algorithm}[t!]
		\caption{Data-driven construction of CLF and its GAS controller}\label{Alg:1}
		\begin{center}
			\begin{algorithmic}[1]
				\REQUIRE 
				A choice for $\mathcal F_i(x_i)$
				\FOR{$i\in\{1, \dots, M\}$ over the time interval $[t_0, t_0 + (T-1)\tau]$,}\label{line 1}
				\STATE
				Collect data $\mathcal U_i^{0,T}$, $\mathcal W_i^{0,T}$, $\mathcal X_i^{0,T}$, and $\mathcal X_i^{1,T}$ using \eqref{U0}-\eqref{X1}
				\STATE
				Compute $\mathcal J^{0,T}_i$ as in~\eqref{eq:monomial traj}
				\ENDFOR
				\FOR{$i\in\{1, \dots, M\}$,}
				\STATE
				Utilize \textsf{SOSTOOLS} and design $\Phi_i(x_i)$ through fulfillment of conditions $\mathcal J_i^{0,T}\Phi_i(x_i) = \aleph_i(x_i)\Xi$\footnotemark    in~\eqref{eq:con1 SOS}, and $\mathcal C \preceq -\kappa_i \Xi$ in~\eqref{eq:con2 SOS}, with $\Xi \succ 0$, for fixed $\kappa_i, \pi_i \in \R^+$
				\STATE
				Construct ISS Lyapunov function $\mathcal V_i(x_i) = x_i^\top [\aleph_i^\dagger\mathcal J_i^{0,T}\Phi_i(x_i)]^{-1}x_i = x_i^\top \Xi^{-1}x_i$ and controller $u_i = \mathcal U^{0,T}_i\Phi_i(x_i)[\aleph_i^\dagger\mathcal J_i^{0,T}\Phi_i(x_i)]^{-1}x_i = \mathcal U^{0,T}_i \Phi_i(x_i)\Xi^{-1}x_i $
				\ENDFOR
				\IF{compositional condition~\eqref{eq:comp con} holds}
				\STATE
				$\mathcal V(x) \Let \sum_{i = 1}^{M} \mathcal V_i(x_i)$ is a CLF for the network $\Sigma$, and controller $u = [u_1;\dots;u_M]$ makes the network GAS
				\ELSE
				\STATE
				Return to Step \ref{line 1} and repeat the required procedures with a larger dataset comprising more collected samples $T$
				\ENDIF
				\ENSURE CLF $\mathcal V(x)= \sum_{i=1}^M x_i^\top [\aleph_i^\dagger\mathcal J_i^{0,T}\Phi_i(x_i)]^{-1}x_i$, controller $u = [u_1;\dots;u_M]$ with $	u_i \!=\! \mathcal U^{0,T}_i\Phi_i(x_i) [\aleph_i^\dagger\mathcal J_i^{0,T}\!\Phi_i(x_i)]^{-1}\!x_i,$ satisfaction of GAS property across the network
			\end{algorithmic}
		\end{center}
	\end{algorithm}
	
	\subsection{Computational Complexity Analysis}
	One can leverage SOS optimization programs to impose conditions~\eqref{eq:con1 SOS}-\eqref{eq:con2 SOS}, as illustrated in Remark~\ref{SOS}. The computational complexity of SOS is contingent upon both the polynomial degree of candidate Lyapunov functions and the dimension of state variables. Existing research demonstrates that with a fixed degree of Lyapunov functions, computational demands for designing a CLF escalates \emph{polynomially} as the network dimension expands (see, for instance, \citep{wongpiromsarn2015automata} for a similar argument for barrier functions). In contrast, our compositional data-driven approach substantially mitigates this complexity, scaling \emph{linearly} with the number of subsystems. It is worth mentioning that in our work, the complexity still scales polynomially within each \emph{subsystem level}, but linearly with the number of subsystems, thus making our approach practical for analyzing networks with a large number of subsystems. In particular, the polynomial complexity at the subsystem level is not restrictive in our setting, as we applied our findings to networks with $2000$ dimensions, each comprising subsystems of 3 dimensions (cf. Table \ref{tab:benchmark}). This expands the applicability of formal data-driven techniques to large-scale networks.
	
	\subsection{Limitations}\label{sub:lim}
	\footnotetext{By defining $\Xi = P^{-1}$, we address conditions \eqref{eq:con1 SOS} and \eqref{eq:con2 SOS} simultaneously so that the product of matrices $\mathcal J_i^{0,T}$ and $\Phi_i(x_i)$ yields the product of $\aleph_i(x_i)$ and a symmetric, positive-definite matrix $\Xi$, and $\mathcal{C}\preceq-\kappa_i\Xi$. Subsequently, $\Xi$ equates to the inverse of matrix $P$, and upon inversion, we obtain matrix $P$ itself.}
	While our proposed approach offers numerous benefits, it also has certain limitations, as is the case with any method, that are worth noting. First, as previously mentioned, we allow $\mathcal F_i(x_i)$ to include all monomials up to a maximum degree informed by physical insights into the unknown system. This means $\mathcal F_i(x_i)$ may contain extra terms that do not exist in the actual dynamics, potentially making the approach conservative. While the inclusion of these redundant terms can increase computational costs, it ensures that $\mathcal F_i(x_i)$ potentially includes all monomial terms present in the actual unknown subsystem, enabling one to provide formal GAS certificates for networks with unknown subsystems. Moreover, this work focuses on polynomial nonlinearities, whereas practical system dynamics may involve broader nonlinear terms, such as fractional, exponential, or trigonometric functions. Extending our results to a wider class of nonlinear systems beyond polynomials is an ongoing research direction.
	
		\begin{figure}[t!]
		\centering
		\subfloat[\centering Fully connected\label{fig:fully}]{{\resizebox{0.25\linewidth}{!}{\begin{tikzpicture}
						\tikzset{vertex/.style = {
								shape=circle,
								draw,
								ball color=vividcerise!70, 
								minimum size=3em
						}}
						\tikzset{edge/.style = {draw, thick}}
						\node[vertex] (a) at (0:2) {$\Sigma_1$};
						\node[vertex] (b) at (60:2) {$\Sigma_2$};
						\node[vertex] (c) at (120:2) {$\Sigma_3$};
						\node[vertex] (d) at (180:2) {$\Sigma_4$};
						\node[vertex] (e) at (240:2) {$\Sigma_5$};
						\node[vertex] (f) at (300:2) {$\Sigma_M$};
						\draw[edge] (a) -- (b);
						\draw[edge] (b) -- (c);
						\draw[edge] (c) -- (d);
						\draw[edge] (d) -- (e);
						\draw[edge, dotted] (e) -- (f);
						\draw[edge, dotted] (f) -- (a);
						\draw[edge] (a) -- (c);
						\draw[edge] (a) -- (d);
						\draw[edge] (a) -- (e);
						\draw[edge] (b) -- (d);
						\draw[edge] (b) -- (e);
						\draw[edge, dotted] (b) -- (f);
						\draw[edge] (c) -- (e);
						\draw[edge, dotted] (c) -- (f);
						\draw[edge, dotted] (d) -- (f);
						\draw[edge ] (e) -- (a);
					\end{tikzpicture}
		} }}%
		\quad
		\subfloat[\centering Ring\label{fig:ring}]{\resizebox{0.25\linewidth}{!}{%
				\begin{tikzpicture}
					\tikzset{vertex/.style = {
							shape=circle,
							draw,
							ball color=tangelo!70, 
							minimum size=3em
					}}
					\tikzset{edge/.style = {draw, thick, -latex}}
					\node[vertex] (1) at ({360/6 * (1 - 1)}:2cm) {$\Sigma_1$};
					\node[vertex] (2) at ({360/6 * (2 - 1)}:2cm) {$\Sigma_2$};
					\node[vertex] (3) at ({360/6 * (3 - 1)}:2cm) {$\Sigma_3$};
					\node[vertex] (4) at ({360/6 * (4 - 1)}:2cm) {$\Sigma_4$};
					\node[vertex] (5) at ({360/6 * (5 - 1)}:2cm) {$\Sigma_5$};
					\node[vertex] (6) at ({360/6 * (6 - 1)}:2cm) {$\Sigma_M$};
					\draw[edge] (1) -- (2);
					\draw[edge] (2) -- (3);
					\draw[edge] (3) -- (4);
					\draw[edge] (4) -- (5);
					\draw[edge, dotted] (5) -- (6);
					\draw[edge] (6) -- (1);
				\end{tikzpicture}
		}}
		\quad
		\subfloat[\centering Binary tree\label{fig:binary}]{
			\resizebox{0.3\linewidth}{!}{%
				\begin{tikzpicture}[level/.style={sibling distance = 4cm/#1,
						level distance = 1.5cm}]
					\tikzset{vertex/.style = {shape=circle,draw,ball color= lemon!70, minimum size=3.5em}}
					\tikzset{edge/.style = {-latex, draw, thick}}  
					\tikzset{dashdot/.style = {draw, thick, dotted, -latex}}
					\node [vertex] (a) {$\Sigma_1$}
					child {node [vertex] (b) {$\Sigma_2$}[-latex, thick]
						child {node [vertex,thin] (d) {{\tiny $\Sigma_{2^{(l-1)}}$}}[dotted]
						}
						child {node [vertex,thin] (e) {$\cdots$}[dotted]
						}
					}
					child {node [vertex] (c) {$\Sigma_3$}[-latex, thick]
						child {node [vertex,thin] (f) {$\cdots$}[dotted]
						}
						child {node [vertex,thin] (g) {$\Sigma_M$}[dotted]
						}
					};
					\path (e.east) -- (f.west) node [midway] {$\cdots$};
				\end{tikzpicture}
		}}
		\\
		\subfloat[\centering Star \label{fig:star}]{{\resizebox{0.25\linewidth}{!}{%
					\begin{tikzpicture}
						\tikzset{vertex/.style = {
								shape=circle,
								draw,
								ball color= royalblue(web)!70, 
								minimum size=3em
						}}
						\tikzset{edge/.style = {draw, thick, -latex}}
						\node[vertex] (center) at (0,0) {$\Sigma_1$};
						\node[vertex] (1) at (0:2) {$\Sigma_2$};
						\node[vertex] (2) at (72:2) {$\Sigma_3$};
						\node[vertex] (3) at (144:2) {$\Sigma_4$};
						\node[vertex] (4) at (216:2) {$\Sigma_5$};
						\node[vertex] (5) at (288:2) {$\Sigma_M$};
						\draw[edge] (center) -- (1);
						\draw[edge] (center) -- (2);
						\draw[edge] (center) -- (3);
						\draw[edge] (center) -- (4);
						\draw[edge, dotted] (center) -- (5);
					\end{tikzpicture}
		} }}%
		\quad
		\subfloat[\centering Line\label{fig:line}]{\resizebox{0.3\linewidth}{!}{%
				\begin{tikzpicture}[node distance=2cm]
					\tikzset{vertex/.style = {
							shape=circle,
							draw,
							ball color= persiangreen!70, 
							minimum size=3em
					}}
					\tikzset{edge/.style = {draw, thick}}
					\node[vertex] (1) {$\Sigma_1$};
					\node[vertex] (2) [right of=1] {$\Sigma_2$};
					\node[vertex] (3) [right of=2] {$\Sigma_3$};
					\node[vertex] (4) [right of=3] {$\Sigma_M$};
					\draw[edge,-latex,thick] (1) -- (2);
					\draw[edge,-latex,thick] (2) -- (3);
					\draw[edge, dotted,-latex,thick] (3) -- (4);
				\end{tikzpicture}
		}}
		\caption{Different interconnection topologies employed in Table \ref{tab:benchmark}. In the binary tree topology shown in~ \protect\subref{fig:binary}, $M=2^l-1$, where $l$ is the number of tree's levels.\label{fig:topologies}}%
	\end{figure}
	
Another point of note, as briefly mentioned in Remark~\ref{remark:derivative}, is that the derivative of states encoded in $\mathcal X_i^{1,T}$ cannot be measured directly, whereas their data is required in our framework. Two immediate solutions to overcome this burden are either to obtain their values numerically using $\frac{x_i(t_0+\tau) - x_i(t_0)}{\tau}$ or to employ suitable filters to estimate them via existing approaches~\citep{larsson2008estimation, padoan2015towards}. Nevertheless, both methods can introduce approximation errors, which are manageable and not a significant concern. Particularly, a technique similar to that in~\cite{guo2021data} can be employed to integrate the impact of error in the derivative approximation. More precisely, in this approach, the data $\mathcal X^{1,T}_i$ is expressed as $\mathcal X^{1,T}_i = \widetilde{\mathcal X}^{1,T}_i + \Delta_i$, where $\widetilde{\mathcal X}^{1,T}_i$ represents the error-free data and $\Delta_i$ is a small noise capturing the error. The only assumption required is that $\Delta_i$ satisfies $\Delta_i\Delta_i^\top \preceq \Lambda_i\Lambda_i^\top$ for some known matrix $\Lambda_i$, which is a common assumption in control and intuitively indicates that the noise energy during data collection is bounded. We opted not to use this method in the paper for the sake of a clearer presentation and since it does not represent a core contribution of this work.
	
	\section{Simulation results}\label{compu}
	In this section, we demonstrate the effectiveness of our compositional data-driven approach by applying it to three networks: two real-world systems, including spacecraft~\citep{khalil2002control} and \emph{chaotic} Lorenz-type systems (referred to as Lu) \citep{strogatz2018nonlinear}, and an academic example, all \emph{interconnected via various topologies}, as shown in Figure~\ref{fig:topologies}. A general yet brief overview of these case studies is provided in Table \ref{tab:benchmark}. In all case studies, we assume that the matrices $A_i$, $B_i$ of subsystems and, accordingly, $A$ and $B$ of the networks are unknown. All simulations are performed in \MATLAB \textsl{R2023b} on a MacBook Pro (Apple M2 Pro with 16GB memory). 
	
	The primary goal across all case studies is to design a CLF and its GAS controller for the interconnected network with an unknown mathematical model. To accomplish this, under Algorithm~\ref{Alg:1}, we  gather input-state trajectories from subsystems as described in~\eqref{New} and satisfy condition~\eqref{eq:con2 SOS} by constructing ISS Lyapunov functions and local controllers for all subsystems. Then, utilizing the compositionality results from Theorem~\ref{th:comp}, we compositionally design a CLF and its controller for the interconnected network based on ISS Lyapunov functions of subsystems, ensuring the network is GAS. In the subsequent subsections, each case study is presented and the corresponding results are reported.
	
	\begin{table*}[h!]
		\centering
		\caption{An overview of our data-driven findings across a set of interconnected networks, including runtime (\texttt{RT}) and memory usage (\texttt{MU}), required for each subsystem. $M$ represents the number of subsystems, and $T$ denotes the number of collected samples.\label{tab:benchmark}}
		{\small \begin{tabular}{@{}ccccccc@{}}
			\toprule
			\multirow{2}{*}[-0.25em]{System} & \multirow{2}{*}[-0.25em]{Selected $\mathcal F_i(x_i)$} & \multicolumn{3}{c}{Network parameters} & \multicolumn{2}{c}{Computation costs}\\
			\cmidrule(lr){3-5} \cmidrule(lr){6-7} \vspace{-0.25cm}\\
			{} & {} & Topology & $M$ & $T$  & \texttt{RT} (sec) & \texttt{MU} (Mb)\\
			\midrule
			\myalign{l}{\multirow{2}{*}{Spacecraft}} & \multirow{2}{*}{$[x_{1i};x_{2i};x_{3i};x_{1i}^2;x_{2i}^2;x_{3i}^2;x_{1i}x_{2i};x_{1i}x_{3i};x_{2i}x_{3i}]$} & \myalign{l}{Fully connected}  & $1000$ & $100$  & $< 1$ & $3.54$ \\
			{} & {} & \myalign{l}{Ring}   & $2000$  & $15$ & $< 1$ & $< 1$\\
			\midrule
			\myalign{l}{\multirow{2}{*}{Academic network}} & \multirow{2}{*}{$[x_{1i};x_{2i};x_{1i}^2;x_{2i}^2;x_{1i}x_{2i}]$}  &  \myalign{l}{Binary tree}   &	$2047$	&	$12	$		&	$< 1$	& $< 1$\\
			{}	& {} &	\myalign{l}{Star}	&		$1000$	&		$10$	&	$< 1$	& $< 1$	\\
			\midrule
			\myalign{l}{Lu}	& $[x_{1i};x_{2i};x_{3i};x_{1i}x_{3i};x_{1i}x_{2i};x_{2i}x_{3i}]$	&	\myalign{l}{Line}	&		$2000$	&		$150$	&		$< 1$ & $1.41$	\\
			\bottomrule
		\end{tabular}}
	\end{table*}
	
	\begin{figure*}[h!]
		\subfloat[\centering Evolution of open-loop network\label{fig:sc time ol}]{
			\includegraphics[width=0.33\linewidth]{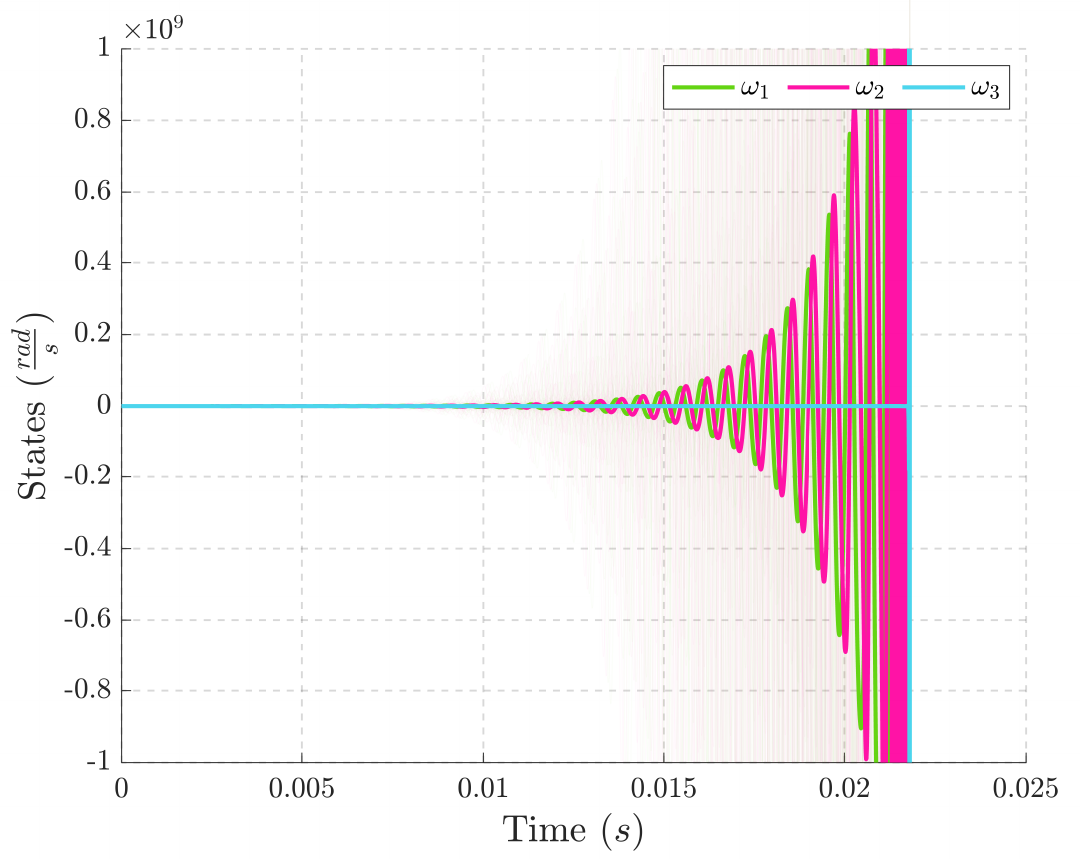}}
		\quad\quad\quad
		\subfloat[\centering 3D visualization of open-loop network\label{fig:sc 3D ol}]{
			\includegraphics[width=0.33\linewidth]{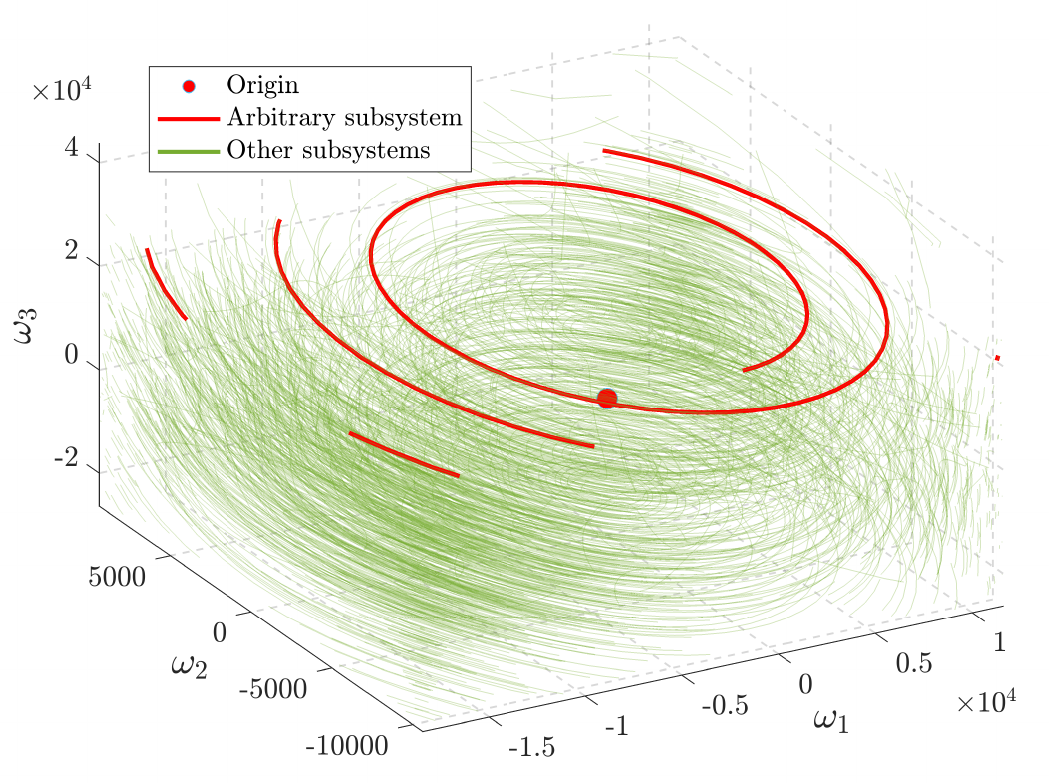}}
		\\
		\subfloat[\centering Evolution of closed-loop network \label{fig:sc fully time}]{
			\includegraphics[width=0.33\linewidth]{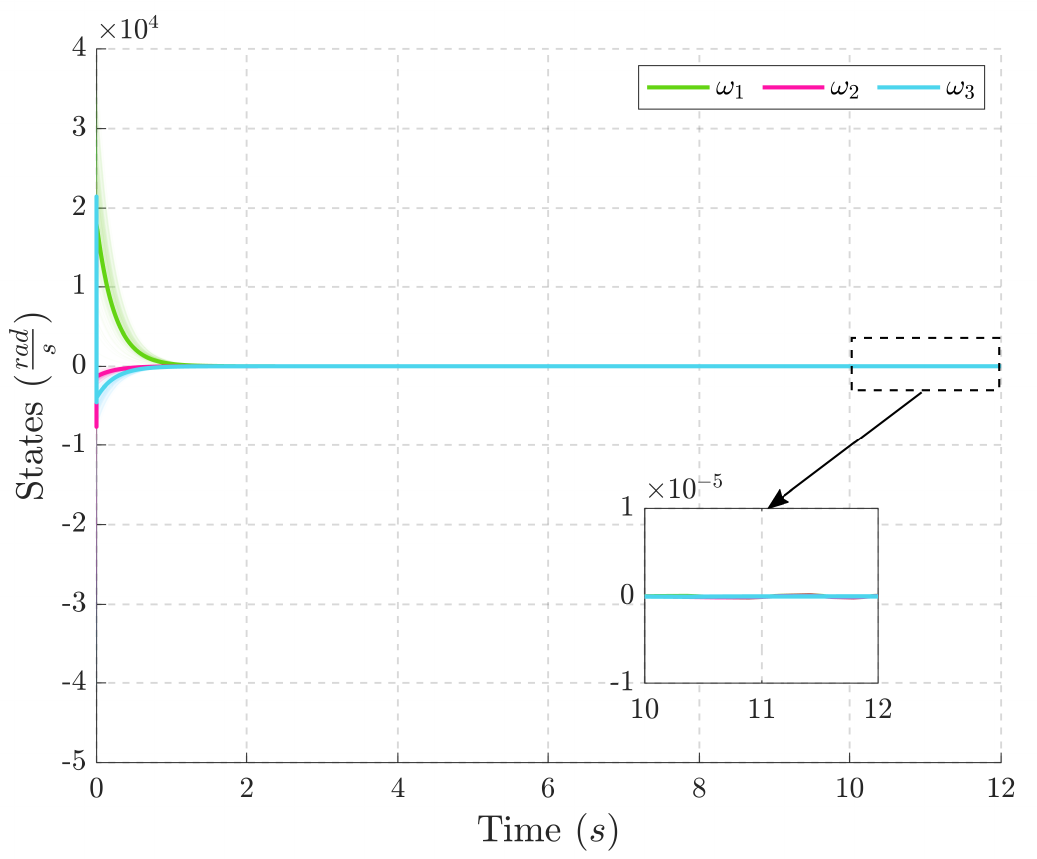}}
		\quad\quad\quad
		\subfloat[\centering 3D visualization of closed-loop network\label{fig:sc fully 3D}]{
			\includegraphics[width=0.33\linewidth]{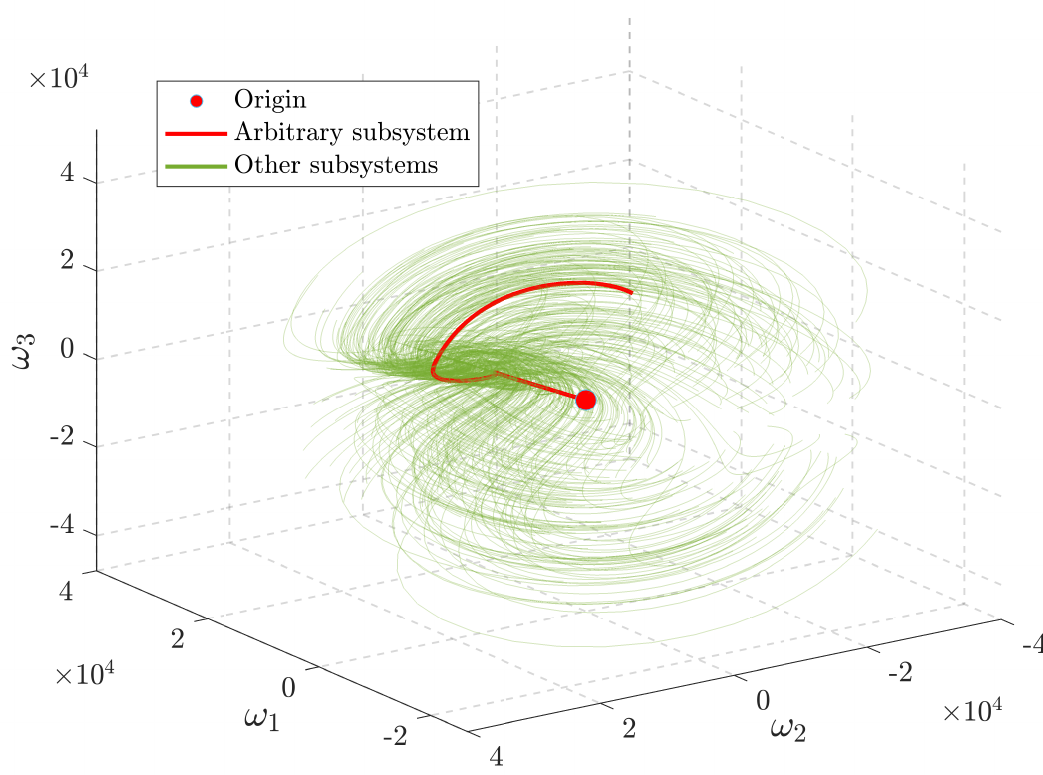}}
		\caption{Interconnected spacecraft network in a fully-interconnected topology: \protect\subref{fig:sc time ol} and \protect\subref{fig:sc 3D ol} depict the network's behavior without the controller, resulting in diverging trajectories. \protect\subref{fig:sc fully time} illustrates the network's evolution under the synthesized controller, highlighting the states of an arbitrary subsystem in bold and shading those of the remaining subsystems. \protect\subref{fig:sc fully 3D} visualizes trajectories of the network under the synthesized controller in 3D, demonstrating that all states converge to the origin.\label{fig:sc fully}}
	\end{figure*}
	
		\begin{figure*}[t!]
		\subfloat[\centering Evolution of closed-loop network\label{fig:sc ring time}]{
			\includegraphics[width=0.33\linewidth]{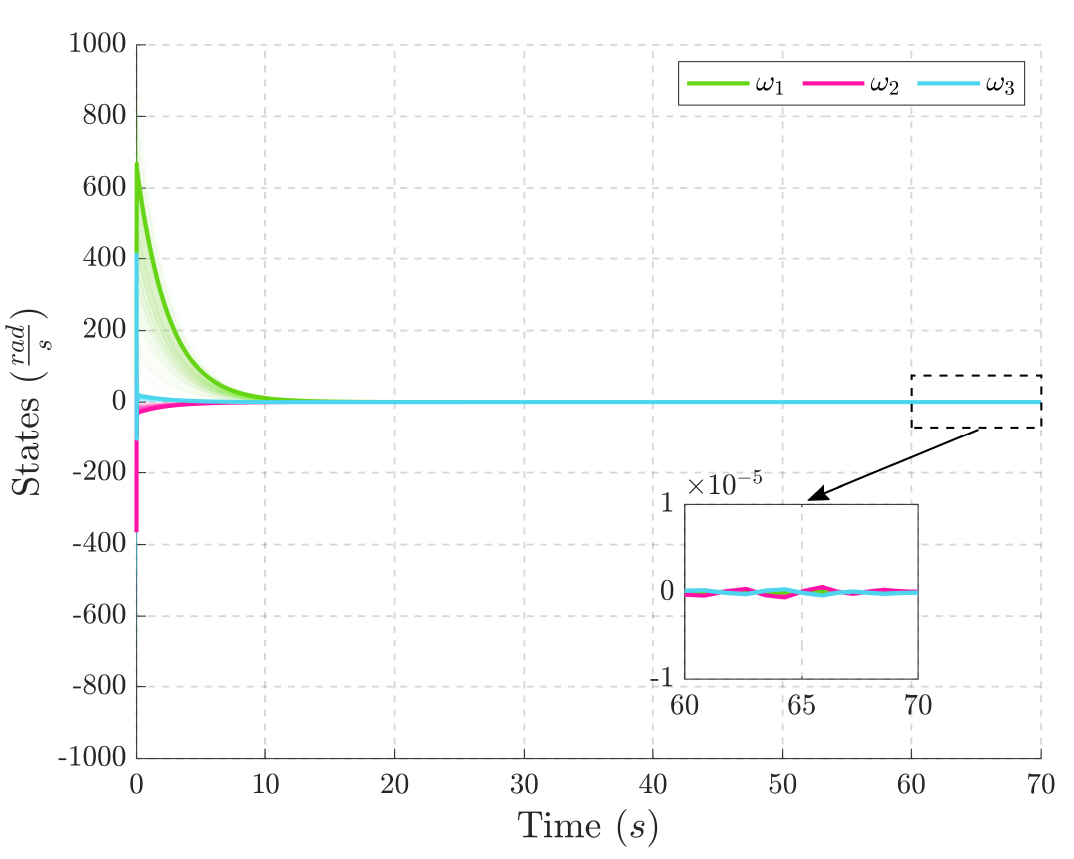}}
		\quad\quad\quad
		\subfloat[\centering 3D visualization of closed-loop network\label{fig:sc ring 3D}]{
			\includegraphics[width=0.33\linewidth]{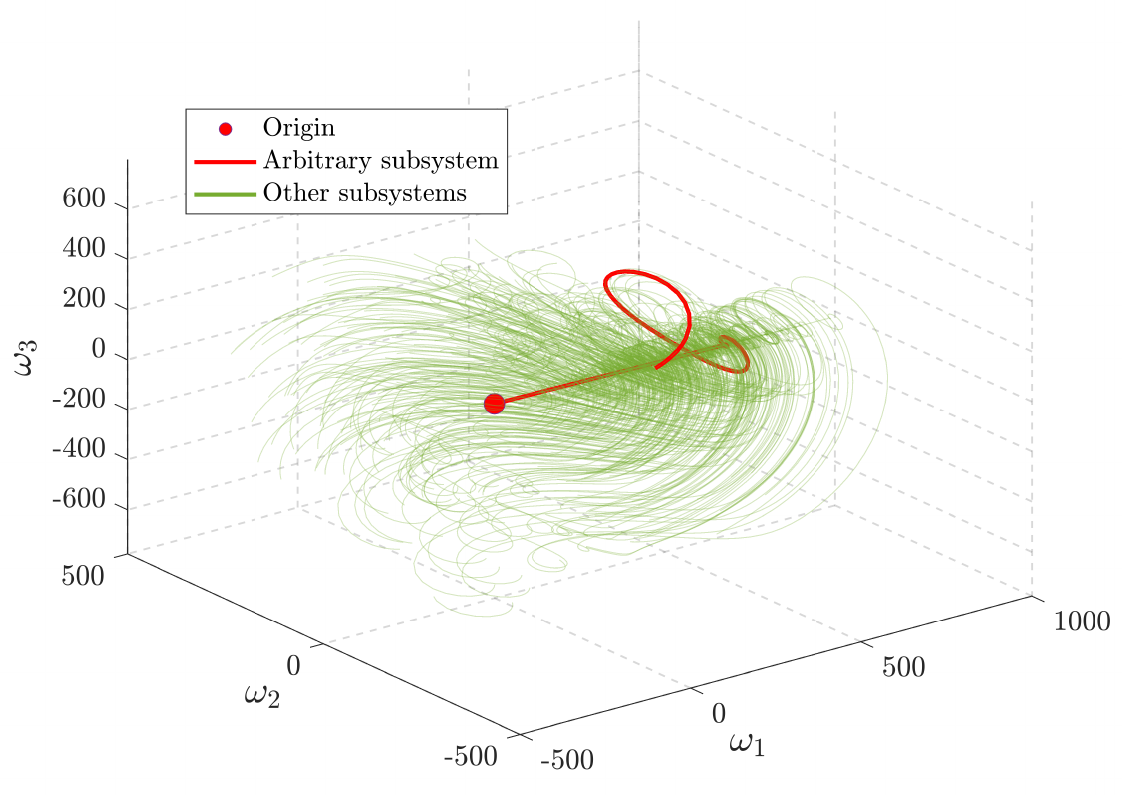}}
		\caption{Interconnected spacecraft network in a ring topology: \protect\subref{fig:sc ring time} demonstrates the evolution of the network under the synthesized controller, highlighting the states of an arbitrary subsystem in bold and shading those of the remaining subsystems. \protect\subref{fig:sc ring 3D} visualizes trajectories of the network under the synthesized controller in 3D, demonstrating that all states converge to the origin.\label{fig:sc ring}}
	\end{figure*}
	
	\subsection{Spacecraft Network: Fully-interconnected Topology}
	The first case study is a fully-interconnected network of $1000$  rotating rigid spacecraft\footnote{While we assume all subsystems are identical for simplicity, our proposed results are applicable to networks with heterogeneous subsystems.}~\citep{khalil2002control}. Each spacecraft has three states $x_i = [x_{1i};x_{2i};x_{3i}]$, which are the angular velocities $\omega_{1i}$ to $\omega_{3i}$ along the principal axes and are affected by those of its neighboring subsystems. Each subsystem is described by the following dynamics
	\begin{equation}\label{eq:sc}
		\begin{split}
			\dot x_{1i} &= \frac{(J_{2i} - J_{3i})}{J_{1i}}x_{2i}x_{3i} + \frac{1}{J_{1i}}u_{1i} -10^{-4}\sum_{\substack{j=1 \\ j\neq i}}^{1000}x_{3j},\\
			\dot x_{2i} &= \frac{(J_{3i} - J_{1i})}{J_{2i}}x_{1i}x_{3i} + \frac{1}{J_{2i}}u_{2i} -10^{-4}\sum_{\substack{j=1 \\ j\neq i}}^{1000}x_{1j},\\
			\dot x_{3i} &= \frac{(J_{1i} - J_{2i})}{J_{3i}}x_{1i}x_{2i} + \frac{1}{J_{3i}}u_{3i} -10^{-4}\sum_{\substack{j=1 \\ j\neq i}}^{1000}x_{2j},
		\end{split}
	\end{equation}
	where $u_i = [u_{1i};u_{2i};u_{3i}]$ is the torque input, and $J_{1i}$ to $J_{3i}$ are the principal moments of inertia. Considering $\mathcal F_i(x_i) = [x_{1i};x_{2i};x_{3i};$ $x_{2i}x_{3i};x_{1i}x_{3i};x_{1i}x_{2i}]$ as the actual monomials of each subsystem, unknown matrices
	\begin{align*}
		A_i &= \begin{bmatrix}
			0 & 0 & 0 &  \frac{(J_{2i} - J_{3i})}{J_{1i}} & 0 & 0 \\
			0 & 0 & 0 & 0 & \frac{(J_{3i} - J_{1i})}{J_{2i}} & 0\\
			0 & 0 & 0 & 0 & 0 & \frac{(J_{1i} - J_{2i})}{J_{3i}}
		\end{bmatrix},\\
		B_i &= \begin{bmatrix}
			\frac{1}{J_{1i}} & 0 & 0\\
			0 & \frac{1}{J_{2i}}& 0\\
			0 & 0 & \frac{1}{J_{3i}}
		\end{bmatrix},
	\end{align*}
	and the adversarial matrix partitions
	\begin{align*}
		D_{ij} &= 10^{-4}\times\begin{bmatrix}
			0 & 0 & -1\\
			-1 & 0 & 0\\
			0 & -1 & 0
		\end{bmatrix},
	\end{align*}
	according to Definition~\ref{def:interconnection} for a fully-interconnected topology, we have
	\begin{align*}
		A &= \begin{bmatrix}
			A_1 & \tilde D_{12} & \tilde D_{13} & \cdots & \tilde D_{1\,1000}\\
			\tilde D_{21} & A_2 &  \tilde D_{23} & \cdots & \tilde D_{2\,1000}\\
			{} & {} & {} & {} & {}\\
			\vdots & {} & \ddots & {} & \vdots\\
			{} & {} & {} & {} & {}\\
			\tilde D_{999\,1} & \cdots &  \tilde D_{999\,998} & A_{999} & \tilde D_{999\,1000}\\
			\tilde D_{1000\,1} & \cdots & \tilde D_{1000\,998} & \tilde D_{1000\,999} & A_{1000}
		\end{bmatrix}\!,\\
		B &= \mathsf{blkdiag}(B_1,\ldots,B_{1000}).
	\end{align*}
	
	Figures~\ref{fig:sc fully}\protect\subref{fig:sc time ol} and \ref{fig:sc fully}\protect\subref{fig:sc 3D ol} depict the network's behavior in the absence of a controller (open-loop). To avoid such behavior, we construct a CLF and its associated controller utilizing our data-driven framework. To do so, we gather $T = 100$ samples with $\tau=0.01$ from each subsystem~\eqref{eq:sc} and choose $\mathcal F_i(x_i) = [x_{1i};x_{2i};x_{3i};x_{1i}^2;x_{2i}^2;x_{3i}^2;x_{1i}x_{2i};\\
	x_{1i}x_{3i};$ $x_{2i}x_{3i}]$. Following the steps in Algorithm~\ref{Alg:1}, we set $\kappa_i = 0.1,\, \pi_i = 0.71578$ and design

	\begin{align*}
		P_i =& \begin{bmatrix}
			4.8315  &  0.9437 &  -0.0149\\
			0.9437  &  3.4773  & -0.2598\\
			-0.0149 &  -0.2598 &   2.8181
		\end{bmatrix}\!,\\
		\mathcal V_i(x_i) =& ~4.8315x_{1i}^2+1.8873x_{1i}x_{2i}-0.029784x_{1i}x_{3i}+3.4773x_{2i}^2\\
		&-0.51959x_{2i}x_{3i}+2.8181x_{3i}^2,\\
		u_{1i} =&-7.2915x_{1i}^2 + 0.095205x_{1i}x_{2i} - 30.3111x_{1i}x_{3i} + 99.6962x_{2i}^2\\
		 &- 33.8855x_{2i}x_{3i} + 46.4852x_{3i}^2 - 812.734x_{1i} - 124.5246x_{2i}\\
		 &+ 6.5823x_{3i},\\
		u_{2i} = &~37.0636x_{1i}^2 - 135.3413x_{1i}x_{2i} + 115.2349x_{1i}x_{3i} - 27.1295x_{2i}^2\\
		& + 34.5258x_{2i}x_{3i} + 0.39821x_{3i}^2 - 206.8314x_{1i} - 605.312x_{2i}\\
		& + 46.1622x_{3i},\\
		u_{3i} =& -25.567x_{1i}^2 - 83.3144x_{1i}x_{2i} - 90.2213x_{1i}x_{3i} - 1.4901x_{2i}^2\\
		 &- 20.3337x_{2i}x_{3i} + 0.42353x_{3i}^2 - 29.969x_{1i} + 57.9927x_{2i}\\
		 & - 710.6459x_{3i},
	\end{align*}
	with $\underline{\alpha}_i = 2.6603, \overline{\alpha}_i = 5.3229, \rho_i = 1.3957\times10^{-5}$, which satisfies the compositional condition~\eqref{eq:comp con}. Hence, the CLF and GAS controller for the network are obtained as
	\begin{align*}
		\mathcal V(x) &= \sum_{i = 1}^{1000}\mathcal V_i(x_i), ~ u = [u_1;\dots;u_{1000}], 
	\end{align*}
	with $\underline{\alpha} = \underset{i}{\min}\{\underline{\alpha}_i\} = 2.6603, \overline{\alpha} = \underset{i}{\max}\{\overline{\alpha}_i\} = 5.3229,$ and $\kappa = 0.0948$. We apply the GAS controller to the network with arbitrary initial conditions $x(0)\in[-25000,25000]$, which are large enough to evidence the efficacy of our proposed approach. Figures~\ref{fig:sc fully}\protect\subref{fig:sc fully time} and \ref{fig:sc fully}\protect\subref{fig:sc fully 3D} show the closed-loop trajectories of a representative spacecraft subsystem in a fully-interconnected topology under the synthesized controller.
	\begin{remark}
		In our case studies, we assumed error-free state derivative data. However, if state derivatives are approximated, the imposed error ($\Delta_i\Delta_i^\top$)  is bounded by ($\Lambda_i\Lambda_i^\top$) (as discussed in Subsection~\ref{sub:lim}). In this case, \eqref{eq:con2 SOS} should be adjusted to incorporate the error, ensuring robustness at the cost of increased conservatism in satisfying the condition.
	\end{remark}
	\subsection{Spacecraft Network: Ring Topology}
	Here, we consider a ring network of $2000$ spacecrafts. Each subsystem is described by
		\begin{equation}\label{eq:sc ring}
			\begin{split}
				\dot x_{1i} &= \frac{(J_{2i} - J_{3i})}{J_{1i}}x_{2i}x_{3i} + \frac{1}{J_{1i}}u_{1i} -10^{-4}x_{3(i - 1)},\\
				\dot x_{2i} &= \frac{(J_{3i} - J_{1i})}{J_{2i}}x_{1i}x_{3i} + \frac{1}{J_{2i}}u_{2i} -10^{-4}x_{1(i-1)},\\
				\dot x_{3i} &= \frac{(J_{1i} - J_{2i})}{J_{3i}}x_{1i}x_{2i} + \frac{1}{J_{3i}}u_{3i} -10^{-4}x_{2(i - 1)},
			\end{split}
		\end{equation}
		for $i\in\{2,\dots,2000\}$, where the first subsystem is affected by the last subsystem.
		The matrices of the ring network as per Definition~\ref{def:interconnection} are as follows:
		\begin{align*}
			A &= \begin{bmatrix}
				A_1 & \mathbf{0}_{3\times6} & \mathbf{0}_{3\times6} & \cdots & \tilde D_{1\,2000}\\
				\tilde D_{21} & A_2 & \mathbf{0}_{3\times6} & \cdots & \mathbf{0}_{3\times6}\\
				{} & {} & {} & {} & {}\\
				\vdots & {} & \ddots & {} & \vdots\\
				{} & {} & {} & {} & {}\\
				\mathbf{0}_{3\times6} & \cdots &  \tilde D_{1999\,1998} & A_{1999} & \mathbf{0}_{3\times6}\\
				\mathbf{0}_{3\times6} & \cdots & \mathbf{0}_{3\times6} & \tilde D_{2000\,1999} & A_{2000}
			\end{bmatrix},\\
			B &= \mathsf{blkdiag}(B_1,\ldots,B_{2000}).
		\end{align*}
	 We collect $T = 15$ samples with a sampling time of $\tau = 0.01$ from each subsystem in~\eqref{eq:sc ring}. Following the procedure outlined in Algorithm~\ref{Alg:1}, we set $\kappa_i = 10^{-6}$ and $\pi_i = 0.00077986$, obtaining the following results:
	\begin{align*}
		P_i =& \begin{bmatrix}
			318.1693 & -19.1415 &  49.8083\\
			-19.1415 & 857.7995 & 184.3652\\
			49.8083 & 184.3652 & 505.8565
		\end{bmatrix},\\
		\mathcal V_i(x_i) =& ~318.1693x_{1i}^2 - 38.2831x_{1i}x_{2i} + 99.6167x_{1i}x_{3i} \\
		&+ 857.7995x_{2i}^2 + 368.7303x_{2i}x_{3i} + 505.8565x_{3i}^2,\\
		u_{1i} =& ~0.11777x_{1i}^2 + 16.3415x_{1i}x_{2i} + 12.7878x_{1i}x_{3i} + 80.6853x_{2i}^2 \\
		&+ 63.0886x_{2i}x_{3i} + 15.6556x_{3i}^2 - 77.3744x_{1i} + 5.4254x_{2i} \\
		&+ 9.8153x_{3i},\\
		u_{2i} = &-2.6059x_{1i}^2 - 16.1453x_{1i}x_{2i} - 40.3623x_{1i}x_{3i} + 4.0518x_{2i}^2\\
		& + 4.9017x_{2i}x_{3i} + 17.1156x_{3i}^2 + 12.9201x_{1i} - 206.6947x_{2i} \\
		&+ 19.7909x_{3i},\\
		u_{3i} = &-2.6307x_{1i}^2 - 88.7645x_{1i}x_{2i} - 47.1421x_{1i}x_{3i} - 15.7124x_{2i}^2 \\
		&- 35.6218x_{2i}x_{3i} - 11.6692x_{3i}^2 - 3.8432x_{1i} + 107.8112x_{2i} \\
		&- 141.9266x_{3i},
	\end{align*}
	with $\underline{\alpha}_i = 296.3880, \overline{\alpha}_i = 936.7001, \rho_i = 1.2823\times10^{-5}$. These parameters satisfy the compositional condition~\eqref{eq:comp con}, yielding $\underline{\alpha} = \underset{i}{\min}\{\underline{\alpha}_i\} = 296.3880$, $\overline{\alpha} = \underset{i}{\max}\{\overline{\alpha}_i\} = 936.7001$, and $\kappa = 9.5674 \times 10^{-7}$. Consequently, the CLF and GAS controller for the network are given by
	\begin{align*}
		\mathcal V(x) &= \sum_{i = 1}^{2000}\mathcal V_i(x_i), ~ u = [u_1;\dots;u_{2000}].
	\end{align*}
	We apply the GAS controller to the network with arbitrary initial conditions $x(0)\in[-400,400]$. Figures~\ref{fig:sc ring}\protect\subref{fig:sc ring time} and \ref{fig:sc ring}\protect\subref{fig:sc ring 3D} show the closed-loop trajectories of a representative spacecraft subsystem in a ring topology under the synthesized controller, verifying the efficacy of our approach.
	\begin{figure*}[t!]
		\subfloat[\centering Evolution of closed-loop network\label{fig:academic time}]{
			\includegraphics[width=0.33\linewidth]{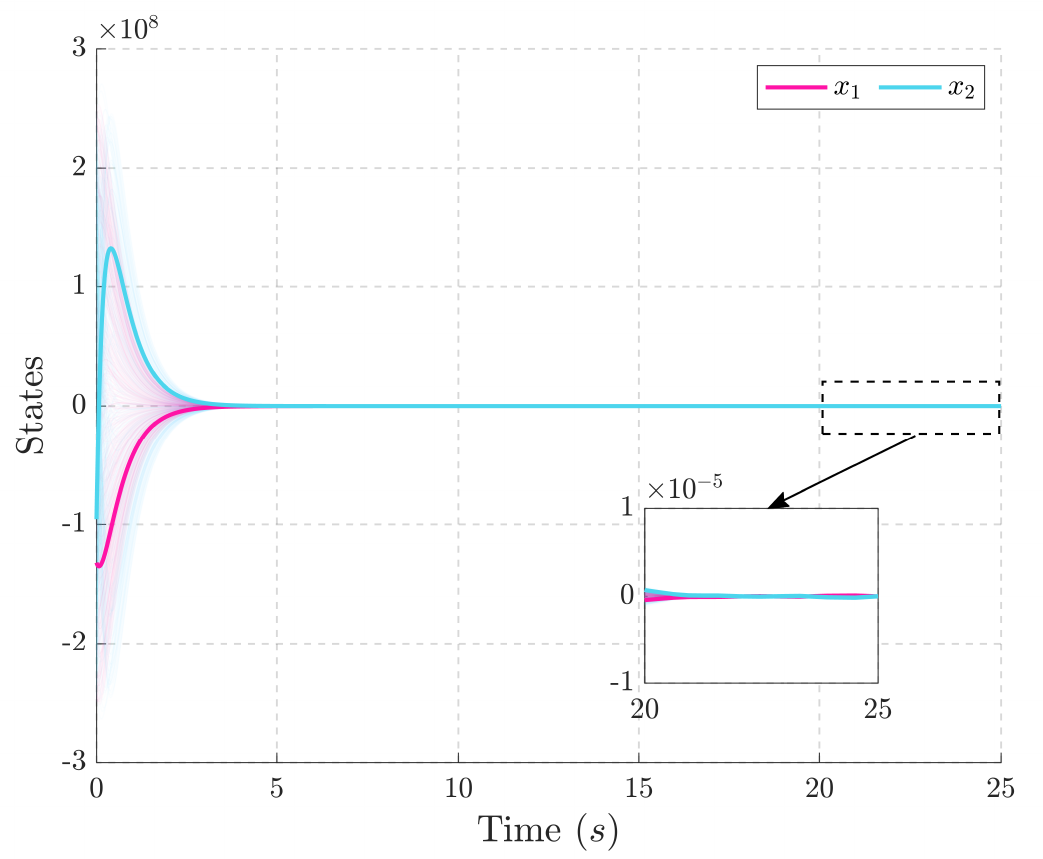}}
	\quad\quad\quad
		\subfloat[\centering ISS Lyapunov function $\mathcal V_i(x_i)$\label{fig:CLF}]{
			\includegraphics[width=0.33\linewidth]{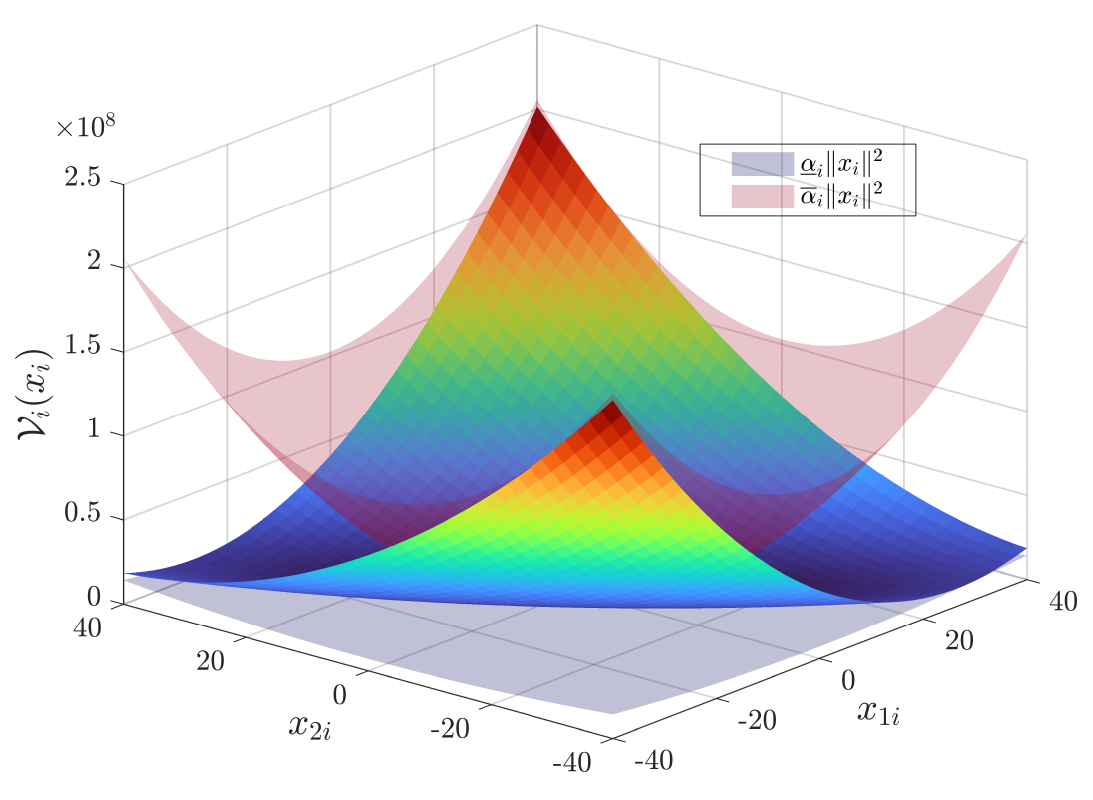}}
		\caption{Interconnected academic network in a binary tree topology: \protect\subref{fig:academic time} demonstrates the evolution of the network under the synthesized controller, highlighting the states of an arbitrary subsystem in bold and shading those of the remaining subsystems; \protect\subref{fig:CLF} illustrates the ISS Lyapunov function $\mathcal V_i(x_i)$ of an arbitrary subsystem, demonstrating the fulfillment of condition~\eqref{eq:ISS-con1}.\label{fig:academic}}
	\end{figure*}
	
	\subsection{Academic Network: Binary Tree Topology}
Another case study involves a binary tree network of $2047$ subsystems of an academic example. Each subsystem has two states $x_i = [x_{1i};x_{2i}]$, and is described by the following dynamics:
\begin{equation}\label{eq:academic}
	\begin{split}
		\dot x_{1i} &= x_{2i} + 10^{-2}\Pi(i = 2j \vee i = 2j + 1) x_{1j},\\
		\dot x_{2i} & = x_{1i}^2 + u_i,
	\end{split}
\end{equation}
where $u_i$ is the control input, and $\Pi$ is an indicator function defined as
\begin{align*}
	\Pi(i = 2j \vee i = 2j + 1) = \begin{cases}
		1 & i = 2j \vee i = 2j + 1,\\
		0 & \text{Otherwise}.
	\end{cases}
\end{align*}
with $\vee$ being the logical \texttt{OR} operation. Then, considering $\mathcal F_i(x_i) = [x_{1i};x_{2i};x_{1i}^2]$ as the actual monomials of the subsystems, unknown matrices
\begin{align*}
	A_i = \begin{bmatrix}
		0 & 1 & 0\\
		0 & 0 & 1
	\end{bmatrix}, \quad B_i = \begin{bmatrix}
		0\\
		1
	\end{bmatrix},
\end{align*}
and the adversarial matrix partitions
\begin{align*}
	D_{ij} = 10^{-2}\times\begin{bmatrix}
		1 & 0\\
		0 & 0
	\end{bmatrix}\!,
\end{align*}
the network matrices based on Definition~\ref{def:interconnection} are as follows 
\begin{align*}
	A & = \{A_{ij}\} = \begin{cases}
		A_i & i = j,\\
		\tilde D_{ij} & i = 2j \vee i = 2j + 1,\\
		\mathbf{0}_{2\times 3} & \text{Otherwise,}
	\end{cases}\\
	B &= \mathsf{blkdiag}(B_1,\ldots,B_{2047}).
\end{align*}
We gather $T = 12$ samples with $\tau=0.01$ from each subsystem~\eqref{eq:academic} and select $\mathcal F_i(x_i) = [x_{1i};x_{2i};x_{1i}^2;x_{2i}^2;x_{1i}x_{2i}]$. Following the procedure outlined in Algorithm~\ref{Alg:1}, we set $\kappa_i = 0.1,\, \pi_i = 0.00014535$, obtaining

\begin{align*}
	P_i =& ~10^{4}\times\begin{bmatrix}
		4.3254  &  2.8599\\
		2.8599  &  2.5547
	\end{bmatrix},\\
	\mathcal V_i(x_i) =& ~43253.98x_{1i}^2 + 57198.4261x_{1i}x_{2i} + 25546.7906x_{2i}^2,\\
	u_{i} =&-x_{1i}^2 - 1.288\times 10^{-10}x_{1i}x_{2i} - 2.5113\times 10^{-11}x_{2i}^2\\
	& - 8.4045x_{1i} - 6.6181x_{2i},
\end{align*}
with $\underline{\alpha}_i = 4.4621\times10^3, \overline{\alpha}_i = 6.4339\times10^4, \rho_i = 0.68799$, satisfying the compositional condition~\eqref{eq:comp con}. Thus, the CLF and GAS controller for the network are designed as:
\begin{align*}
	\mathcal V(x) &= \sum_{i = 1}^{2047}\mathcal V_i(x_i), ~ u = [u_1;\dots;u_{2047}],
\end{align*}
with $\underline{\alpha} = \underset{i}{\min}\{\underline{\alpha}_i\} = 4.4621\times10^3, \overline{\alpha} = \underset{i}{\max}\{\overline{\alpha}_i\} = 6.4339\times10^4, \kappa = 0.0997$. We apply the GAS controller to the network, starting from arbitrary initial conditions $x(0)\in[-2.5\times 10^{8},2.5\times 10^{8}]$, which are sufficiently large to demonstrate the effectiveness of the proposed approach.
Figure~\ref{fig:academic}\protect\subref{fig:academic time} depicts the closed-loop trajectories of a representative subsystem under the synthesized controller, while Figure~\ref{fig:academic}\protect\subref{fig:CLF} visualizes the satisfaction of condition~\eqref{eq:ISS-con1}.

\subsection{Academic Network: Star Topology}
In this subsection, we consider a network of academic dynamics, with a star topology, comprising $1000$ subsystems. Each subsystem has the following dynamics
\begin{subequations}\label{eq:academic star}
	\begin{equation}
		\begin{split}
			\dot x_{1i} &= x_{2i} + 10^{-2} x_{11},\\
			\dot x_{2i} & = x_{1i}^2 + u_i,
		\end{split}
	\end{equation}
	for $i\in\{2,\dots,1000\}$, whereas for $i = 1$
	\begin{equation}
		\begin{split}
			\dot x_{11} &= x_{21},\\
			\dot x_{21} & = x_{11}^2 + u_1.
		\end{split}
	\end{equation}
\end{subequations}
Hence, in compliance with Definition~\ref{def:interconnection}, the network matrices are
\begin{align*}
	A = \{A_{ij}\} = \begin{cases}
		A_i & i = j,\\
		\tilde D_{ij} & j = 1,\,i\neq j,\\
		\mathbf{0}_{2\times 3} & \text{Otherwise},
	\end{cases}\quad B = \mathsf{blkdiag}(B_1,\ldots,B_{1000}).
\end{align*}
We gather $T = 10$ samples with $\tau=0.01$ from each subsystem~\eqref{eq:academic star}. Following the steps in Algorithm~\ref{Alg:1}, we set $\kappa_i = 1,\, \pi_i = 0.0001661$ and obtain

\begin{align*}
	P_i =& 10^{4}\times\begin{bmatrix}
		4.7642  &  1.0768\\
		1.0768  &  0.3979
	\end{bmatrix},\\
	\mathcal V_i(x_i) =& ~47641.9535x_{1i}^2 + 21535.1163x_{1i}x_{2i} + 3979.1603x_{2i}^2,\\
	u_{i} =&-x_{1i}^2 + 5.9826\times 10^{-11}x_{1i}x_{2i} - 2.5598\times 10^{-11}x_{2i}^2 \\
	&- 41.2568x_{1i} - 11.7528x_{2i},
\end{align*}
with $\underline{\alpha}_i = 1.4682\times10^3, \overline{\alpha}_i = 5.0153\times10^4, \rho_i = 0.60205$, which satisfies the compositional condition~\eqref{eq:comp con}. Hence, the CLF and GAS controller for the network are designed as
\begin{align*}
	\mathcal V(x) &= \sum_{i = 1}^{1000}\mathcal V_i(x_i), ~ u = [u_1;\dots;u_{1000}],
\end{align*}
with $\underline{\alpha} = \underset{i}{\min}\{\underline{\alpha}_i\} = 1.4682\times10^3, \overline{\alpha} = \underset{i}{\max}\{\overline{\alpha}_i\} = 5.0153\times10^4, \kappa = 0.5903$. We apply the GAS controller to the network, initialized with arbitrary initial conditions $x(0)\in[-2.5\times 10^8,2.5\times 10^8]$, which are sufficiently large to demonstrate our proposed approach.
Figure~\ref{fig:academic star}\protect\subref{fig:academic star time} depicts the closed-loop trajectories of a representative subsystem under the synthesized controller, while Figure~\ref{fig:academic star}\protect\subref{fig:CLF star} visualizes the satisfaction of condition~\eqref{eq:ISS-con1}.

	\begin{figure*}[t!]
	\subfloat[\centering Evolution of closed-loop network\label{fig:academic star time}]{
		\includegraphics[width=0.33\linewidth]{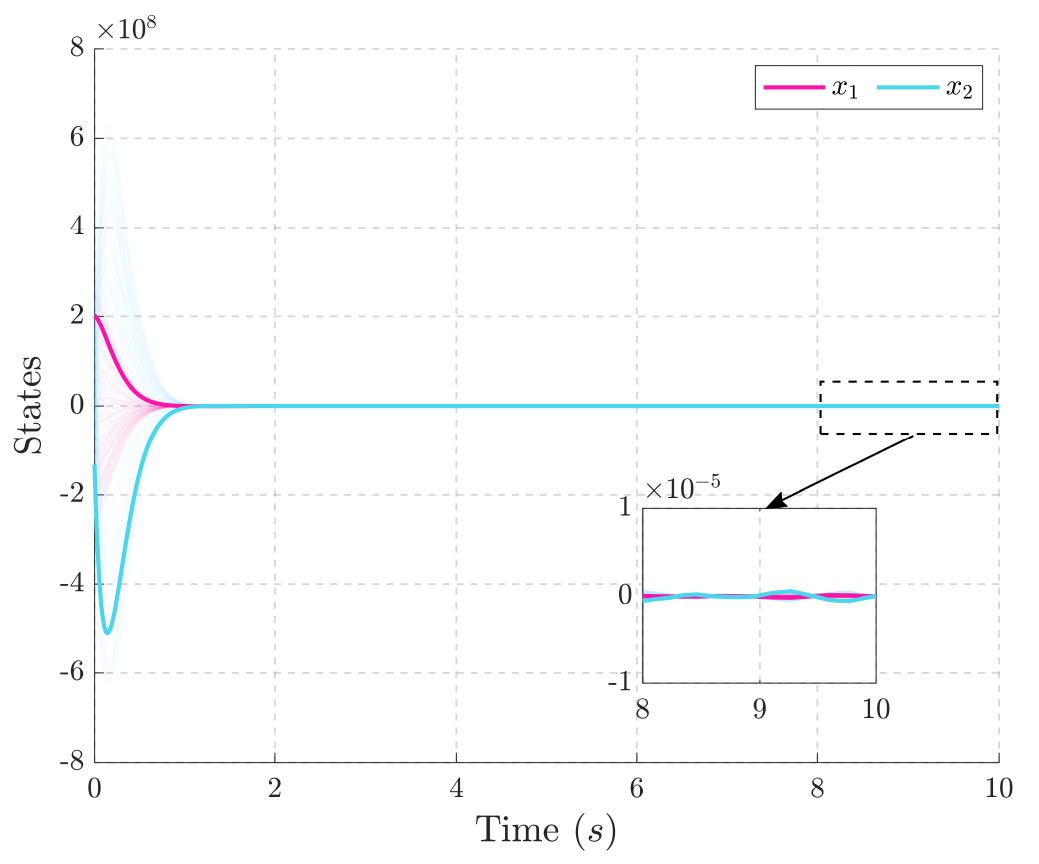}}
	\quad\quad\quad
	\subfloat[\centering ISS Lyapunov function $\mathcal V_i(x_i)$\label{fig:CLF star}]{
		\includegraphics[width=0.33\linewidth]{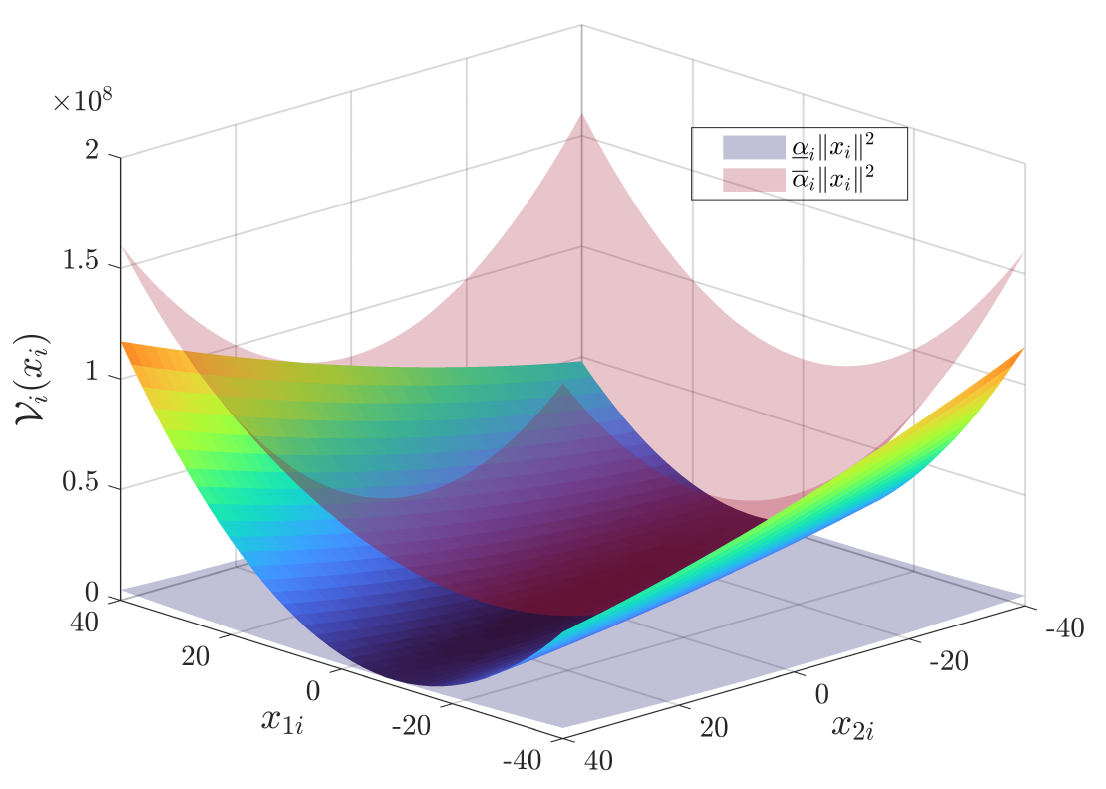}}
	\caption{Interconnected academic network in a star topology: \protect\subref{fig:academic star time} demonstrates the evolution of the network under the synthesized controller, highlighting the states of an arbitrary subsystem in bold and shading those of the remaining subsystems; \protect\subref{fig:CLF star} illustrates the ISS Lyapunov function $\mathcal V_i(x_i)$ of an arbitrary subsystem, demonstrating the fulfillment of condition~\eqref{eq:ISS-con1}.\label{fig:academic star}}
\end{figure*}

\begin{figure*}[t!]
	\subfloat[\centering Evolution of closed-loop network \label{fig:lu time}]{
		\includegraphics[width=0.33\linewidth]{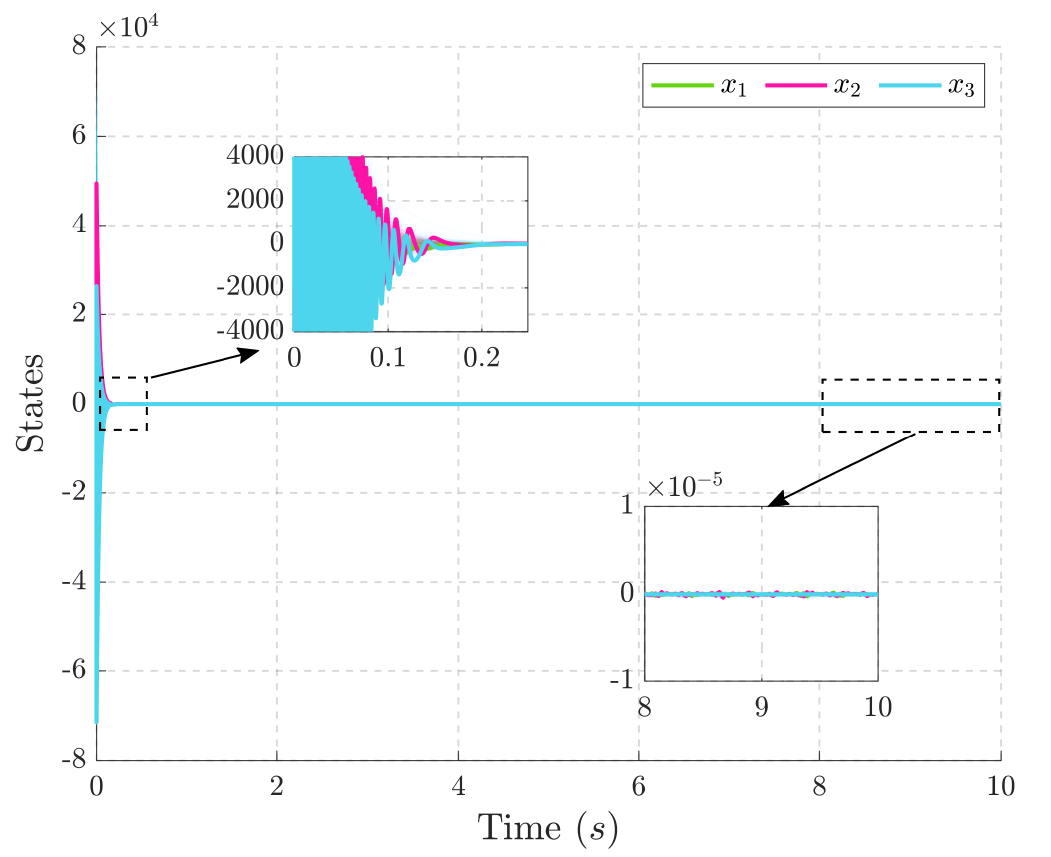}}
	\quad\quad\quad
	\subfloat[\centering 3D visualization of closed-loop network\label{fig:lu 3D}]{
		\includegraphics[width=0.33\linewidth]{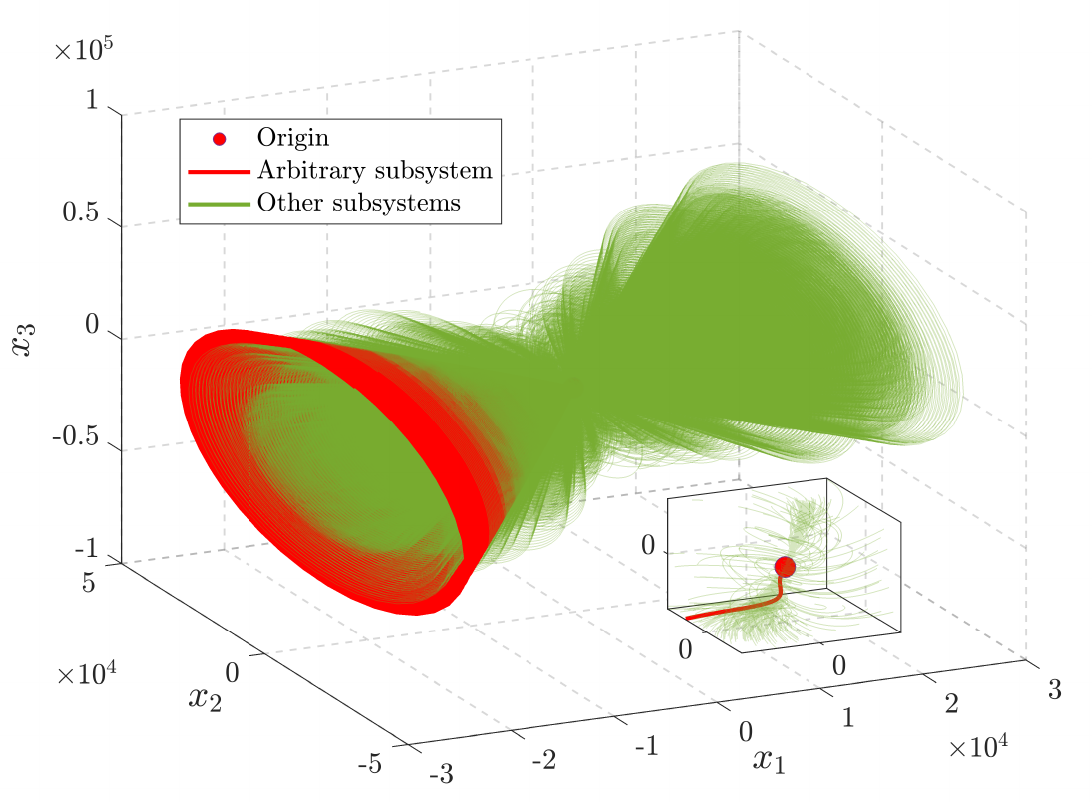}}
	\caption{Interconnected Lu network in a line topology: \protect\subref{fig:lu time} demonstrates the evolution of the network under the synthesized controller, highlighting the states of an arbitrary subsystem in bold and shading those of the remaining subsystems. \protect\subref{fig:lu 3D} visualizes trajectories of the network under the synthesized controller in 3D, demonstrating that all states converge to the origin.\label{fig:lu}}
\end{figure*}

\subsection{Lu Network: Line Topology}
As for the last case study, we consider a line network of $2000$ Lorenz-type systems~\citep{strogatz2018nonlinear}, which are well-known for their chaotic solutions. Lorenz networks are used in various fields, such as climate modeling, where they help understand the dynamics of coupled atmospheric processes. They are also employed in engineering to design robust systems that can tolerate and leverage chaotic behavior, in secure communications to develop encryption schemes, and in neuroscience to model complex brain activities.
The Lorenz-type systems considered here are known as Lu. Each subsystem has three states $x_i = [x_{1i};x_{2i};x_{3i}]$, evolving as follows
\begin{subequations}\label{new76}
	\begin{equation}
		\begin{split}
			\dot x_{1i} &= - 36x_{1i} + 36x_{2i} - 10^{-3} x_{1(i - 1)},\\
			\dot x_{2i} &= 28x_{2i} - x_{1i}x_{3i} + u_i - 10^{-3} x_{2(i - 1)},\\
			\dot x_{3i} &= - 20x_{3i} + x_{1i}x_{2i} - 10^{-3} x_{3(i - 1)},
		\end{split}
	\end{equation}
	for $i\in\{2,\dots,2000\}$ while subsystem $i = 1$ evolves as
	\begin{equation}
		\begin{split}
			\dot x_{11} &= - 36x_{11} + 36x_{21},\\
			\dot x_{21} &= 28x_{21} - x_{11}x_{31} + u_1,\\
			\dot x_{31} &= - 20x_{31} + x_{11}x_{21}.
		\end{split}
	\end{equation}
\end{subequations}
Therefore, with actual monomials $\mathcal F_i(x_i) = [x_{1i};x_{2i};x_{3i};x_{1i}x_{3i};$ $x_{1i}x_{2i}]$, unknown system matrix and control matrix are
\begin{align*}
	A_i =\begin{bmatrix}
		-36  &  36  &   0  &   0  &   0\\
		0  &  28  &   0  &  -1  &   0\\
		0  &   0  & -20  &   0  &   1
	\end{bmatrix},\quad B_i =\begin{bmatrix}
		0\\
		1\\
		0
	\end{bmatrix},
\end{align*}
and adversarial matrix partitions are
\begin{align*}
	D_{ij} = 10^{-3} \times \begin{bmatrix}
		-1 & 0 & 0\\
		0 & -1 & 0\\
		0 & 0 & -1
	\end{bmatrix}\!.
\end{align*}
The matrices of the line network as per Definition~\ref{def:interconnection} are as follows:
\begin{align*}
	A &= \begin{bmatrix}
		A_1 & \mathbf{0}_{3\times5} & \mathbf{0}_{3\times5} & \cdots & \mathbf{0}_{3\times5}\\
		\tilde D_{21} & A_2 & \mathbf{0}_{3\times5} & \cdots & \mathbf{0}_{3\times5}\\
		{} & {} & {} & {} & {}\\
		\vdots & {} & \ddots & {} & \vdots\\
		{} & {} & {} & {} & {}\\
		\mathbf{0}_{3\times5} & \cdots &  \tilde D_{1999\,1998} & A_{1999} & \mathbf{0}_{3\times5}\\
		\mathbf{0}_{3\times5} & \cdots & \mathbf{0}_{3\times5} & \tilde D_{2000\,1999} & A_{2000}
	\end{bmatrix},\\
	B &= \mathsf{blkdiag}(B_1,\ldots,B_{2000}).
\end{align*}
We collect $T = 150$ samples with a sampling time of $\tau = 10^{-4}$ from each subsystem in~\eqref{new76} and set $\mathcal F_i(x_i) = [x_{1i};x_{2i};x_{3i};x_{1i}x_{3i};x_{1i}x_{2i};$ $x_{2i}x_{3i}]$. Following the procedure outlined in Algorithm~\ref{Alg:1}, we set $\kappa_i = 10^{-4}$ and $\pi_i = 0.1304$, obtaining the following results:
\begin{align*}
	P_i =& \begin{bmatrix}
		154.5767 & -3.0842\times 10^{-8} &  -82.5652\\
		-3.0842\times 10^{-8} & 86.8543 & -5.0350\times 10^{-8}\\
		-82.5652 & -5.0350\times 10^{-8} & 88.5103
	\end{bmatrix}\!,\\
	\mathcal V_i(x_i) =& ~154.5767x_{1i}^2 - 6.1684\times 10^{-8}x_{1i}x_{2i} - 165.1304x_{1i}x_{3i}\\
	& + 86.8543x_{2i}^2 - 1.007\times 10^{-7}x_{2i}x_{3i} + 88.5103x_{3i}^2,\\
	u_{i} =& ~0.95062x_{1i}^2 - 1.2347\times 10^{-8}x_{1i}x_{2i} - 0.019067x_{1i}x_{3i}\\
	& - 1.2869\times 10^{-7}x_{2i}^2 - 1.3213\times 10^{-7}x_{2i}x_{3i} \\
	&+ 2.1723\times 10^{-9}x_{3i}^2 - 64.0693x_{1i} - 77.1287x_{2i} \\
	&+ 34.2218x_{3i},
\end{align*}
with $\underline{\alpha}_i = 32.6154, \overline{\alpha}_i = 210.4716, \rho_i = 7.671137\times10^{-6}$. These parameters satisfy the compositional condition~\eqref{eq:comp con}, yielding $\underline{\alpha} = \underset{i}{\min}\{\underline{\alpha}_i\} = 32.6154$, $\overline{\alpha} = \underset{i}{\max}\{\overline{\alpha}_i\} = 210.4716$, and $\kappa = 9.9765 \times 10^{-5}$. Consequently, the CLF and GAS controller for the network are designed as
\begin{align*}
	\mathcal V(x) &= \sum_{i = 1}^{2000}\mathcal V_i(x_i), ~ u = [u_1;\dots;u_{2000}].
\end{align*}
We apply the GAS controller to the network with arbitrary initial conditions $x(0)\in[-25000,25000]$.
Closed-loop trajectories of a representative Lu subsystem under the synthesized controller are depicted in Figures~\ref{fig:lu}\protect\subref{fig:lu time} and \ref{fig:lu}\protect\subref{fig:lu 3D}.
	
	{\section{Conclusion}
	In this work, we offered an innovative compositional data-driven methodology for synthesizing controllers that guarantee GAS over large-scale interconnected networks with unknown mathematical models. By leveraging data from subsystem trajectories and ensuring input-to-state stability through ISS Lyapunov functions, our framework effectively designed local controllers for individual subsystems. The requirement of collecting only a single input-state trajectory for each subsystem up to a specified time horizon, coupled with utilizing small-gain compositional reasoning, enabled the construction of a CLF for the entire network, thus ensuring a GAS certificate. Importantly, our approach significantly reduced the computational complexity from a polynomial to a linear scale with respect to the number of subsystems. The efficacy and practicality of our method were validated through extensive case studies on various networks with diverse interconnection topologies. Developing a compositional data-driven technique for large-scale networks to encompass a broader class of nonlinear systems beyond polynomials is being investigated as future work.
	
	\bibliographystyle{ieeetran}
	\bibliography{biblio}
	
\end{document}